\documentclass[journal]{IEEEtran}
\ifCLASSINFOpdf
\else
\fi
\usepackage{booktabs}
\usepackage{graphicx} 
\usepackage{stfloats}

\usepackage{textcomp}
\usepackage{stfloats}
\usepackage{verbatim}
\usepackage{graphicx}
\usepackage{booktabs}
\usepackage{amsmath,amsfonts}
\usepackage{algorithmic}
\usepackage{array}
\usepackage{stfloats}
\usepackage{verbatim}
\usepackage{graphicx}
\usepackage{balance}
\usepackage{balance}
\usepackage[utf8]{inputenc}
\usepackage{caption}
\usepackage{graphicx}
\usepackage{float} 
\usepackage{subcaption}
\usepackage{color}
\usepackage{makecell}
\usepackage{orcidlink}
\usepackage{amsthm}
\usepackage{amssymb}

\newtheorem{theorem}{Theorem}
\newtheorem{corollary}{Corollary}

\begin{document}
\title{Theoretical Analysis of Diffusion Models for Radio Map Estimation with Ultra-low Sampling Rates}
\author{Zhiyuan~Liu\orcidlink{0009-0003-1890-5115},~\IEEEmembership{Student Member,~IEEE, }
Qingyu~Liu\orcidlink{0000-0003-1487-7682},~\IEEEmembership{Member,~IEEE, }
Shuhang~Zhang\orcidlink{0000-0003-1313-2726},~\IEEEmembership{Member,~IEEE, }
Hongliang~Zhang\orcidlink{0000-0003-3393-8612},~\IEEEmembership{Member,~IEEE, } Lingyang~Song\orcidlink{0000-0001-8644-8241},~\IEEEmembership{Fellow,~IEEE}

\thanks{Zhiyuan Liu and Qingyu Liu are with the School of Electronic and Computer Engineering,  Peking University Shenzhen Graduate School, Shenzhen, China, 518055; and with the Pengcheng Laboratory, Shenzhen, China, 518055 (e-mail: 
{liuzhiyuan@stu.pku.edu.cn}, {qy.liu@pku.edu.cn}).}
\thanks{Shuhang Zhang and Hongliang Zhang are with the State Key Laboratory of Photonics and Communications, Beijing, China, 100871; and with the School of Electronics, Peking University, Beijing, China, 100871 (email: 
{zhangshuhang@pku.edu.cn}, {hongliang.zhang@pku.edu.cn}).}
\thanks{Lingyang Song is with the State Key Laboratory of Photonics and Communications, Beijing, China, 100871; with the School of Electronics, Peking University, Beijing, China, 100871; with the School of Electronic and Computer Engineering, Peking University Shenzhen Graduate School, Shenzhen, China, 518055; and with the Pengcheng Laboratory, Shenzhen, China, 518055 (e-mail: 
{lingyang.song@pku.edu.cn}).}
}


\maketitle
\thispagestyle{empty}
\pagestyle{empty}

\begin{abstract}
Radio maps, which characterize the spatial distribution of radio frequency metrics such as received signal strength, are essential for a wide range of wireless applications.
The problem of radio map estimation involves constructing a radio map from sparse sensor measurements at multiple locations.
This problem is particularly challenging due to ultra-low sampling rates, where available sensor measurements are far fewer than the high resolution requirement of radio maps to be estimated. 
Recently, diffusion models have been increasingly adopted for this problem, yet its theoretical performance remains unexamined.
This paper bridges this gap by formulating radio map estimation as a non-linear matrix completion problem.
Based on this formulation, we first derive a theoretical lower bound on the minimum estimation error achievable by diffusion models, which is fundamentally governed by the discrepancy between the deployment distribution and the true underlying radio propagation law.
We then extend this bound to incorporate the effect of sampling sparsity, capturing the additional error introduced by ultra-low sampling rates.
Furthermore, we establish a critical sampling rate threshold necessary for diffusion models to achieve performance convergence.
Finally, considering that the derived error bounds depend on certain information that is difficult to obtain in practice, we propose empirical approximations that are readily computable from observable data.
Extensive simulations based on real-world traces demonstrate that these empirical formulas tightly approximate the theoretical error bounds, validating their effectiveness for practical deployment.
\end{abstract}

\begin{IEEEkeywords}
Radio map estimation, spectrum cartography, diffusion model, generative AI.
\end{IEEEkeywords}

\section{Introduction}\label{sec:introduction}
A radio map characterizes key parameters of wireless communication channels, such as received signal strength (RSS), at every location of a geographical area of interest~\cite{RME}.  
It can significantly enhance the performance of various wireless applications, by supporting dynamic spectrum access \cite{Peng24:TCE:Access}, efficient spectrum sharing ~\cite{Matar24:JSAC:Sharing}, fingerprint-based localization ~\cite{CSELM-QE}, and intelligent interference management ~\cite{Huang24:TWC:inference}.

The radio map estimation problem, specifically the sampling-based radio map estimation problem, requires the construction of a radio map based on the limited number of radio samples collected by sensors sparsely distributed in the area, as well as certain prior information such as the layout of obstacles~\cite{RadioUNet,autoencoder,RadioGAT}.
This problem is particularly challenging due to ultra-low sampling rates:
The high cost of capable radio sensing hardware (e.g., even a low-end sensor like the Ettus USRP B200mini can cost over $1, 000$ USD~\cite{wifi-diffusion} ) severely limits the number of sensors that can be deployed in practice. 
Consequently, radio samples are ultra-sparse compared to the high resolution of radio maps to be estimated. 
An effective solution of radio map estimation must be able to achieve accurate estimations with ultra-low sampling rates, making efficient use of extremely limited radio samples.

Traditional radio map estimation approaches are based on physics modeling~\cite{ray-tracing,dominant-path,eg} or interpolation~\cite{rbf,spline,kriging}, which utilize mathematical and statistical methodologies for estimation. 
However, these methods often fail to construct high-quality radio maps for real-world environments that are dynamic and geographically complex. 
This is because they do not account for certain critical prior information, such as the layout of obstacles, which can impact radio signal propagation.

Recent advances in generative Artificial Intelligence (AI), especially diffusion models, lead to a significant shift in how radio maps can be estimated.
There are increasingly works proposing diffusion-based algorithms for radio map estimation.
For example, Liu \emph{et al}. \cite{wifi-diffusion} designed a diffusion-based generative framework for fine-grained radio map estimation. 
Wang \emph{et al}. \cite{RadioDiff} proposed a decoupled diffusion model with an adaptive fast Fourier transform module for radio map estimation.  
Luo \emph{et al}. \cite{RM-Gen} introduced a conditional denoising diffusion probabilistic model for radio map estimation from sparse measurements and transmitter locations.
Qiu \emph{et al}. \cite{IRDM} developed a diffusion model for interpolating indoor path loss radio maps from a limited number of reference points.

Despite the increasingly number of research works in this area, almost all of them adhere to a common profile: they designed diffusion-based models and evaluated them with large-scale simulated data or small-scale real-world data.
However, no theoretical analysis on the performance of the proposed algorithms has been carried out.
To the best of our knowledge, the sole work addressing the theoretical performance of radio map estimation is by Romero \emph{et al}.~\cite{RME-theory}.
Their study first provides a theoretical analysis of the spatial variability of radio maps and then derives upper bounds for the estimation error of certain interpolation-based methods.
However, the results in~\cite{RME-theory} are limited to free-space propagation environments and are specific to interpolation-based techniques.
As such, they are not applicable to practical, complex environments where factors such as obstacle layouts critically impact propagation.
They cannot be extended to diffusion-based algorithms, either.

In this work, we present a theoretical foundation for radio map estimation by diffusion models.
Leveraging information theory and data-driven methodologies, we establish performance bounds and convergence guarantees for diffusion-based radio map estimation models.
Specifically, the main contributions of this paper are summarized as follows:

\begin{itemize}
\item We formulate the problem of radio map estimation by diffusion models as a non-linear matrix completion problem.
Based on the formulation, we derive a theoretical lower bound for the minimum estimation error achievable by diffusion models.
This bound is a function of the difference between the deployment distribution and the true underlying radio propagation law.
It provides a fundamental metric for assessing the effectiveness of diffusion-based radio map estimation methods. 

\item We extend the above bound to incorporate the effect of sampling rate, yielding a generalized lower bound for the estimation error that captures both distributional mismatch and measurement sparsity.
This result quantifies the performance degradation of diffusion models induced by ultra-low sampling rates, thereby characterizing a key practical challenge in accurate radio map estimation.

\item Moreover, based on the generalized sampling-dependent bound, we identify a critical sampling rate threshold above which diffusion models can achieve performance convergence.
This threshold offers theoretical guidance for sampling strategy design: below it, estimation quality improves substantially with increased sampling rates; above it, further sampling yields only marginal gains.

\item Finally, considering that the derived error bounds depend on certain information that is difficult to obtain in practice, we derive empirical formulas that approximate the error bounds and are readily computable from observable data.
Extensive simulations based on real-world traces demonstrate that these empirical formulas tightly approximate the theoretical error bounds, validating their utility for practical deployment.
\end{itemize}

The rest of the paper is organized as follows.
In Section~\ref{sec:related}, we introduce related works.
In Section~\ref{sec:rem}, we illustrate the problem definition of radio map estimation.
In Section~\ref{sec:Preliminaries}, we present the necessary background on matrix completion.
In Section~\ref{sec:rem-matrix}, we formulate the diffusion-based radio map estimation problem as a non-linear matrix completion problem.
In Section~\ref{sec:con}, we provide a convergence analysis of diffusion models for radio map estimation.
In Section~\ref{sec:bounds}, we derive theoretical lower bounds for radio map estimation errors by diffusion models.
In Section~\ref{sec:Empirical-formula}, we propose empirical formulas to approximate our theoretical error bounds.
In Section~\ref{sec:evaluation}, we evaluate the proposed empirical approximation formulas via extensive simulations.
Finally, Section~\ref{sec:conclusion} concludes the paper.

\section{Related Work}\label{sec:related}

The problem of radio map estimation can generally be classified into two categories: sampling-free problem and sampling-based problem. 
The sampling-free problem uses the information from the transmitters to estimate radio maps. 
Existing studies solving these problems include ~\cite{ray-tracing,dominant-path,eg}.
The sampling-free problem assumes knowledge of the information (e.g., number, locations, and transmitting power) of transmitters.
This assumption may not hold in practice, where there often exist many transmitters unknown to users.
In contrast, the sampling-based problem uses a limited number of radio samples collected by sensors sparsely distributed within the region of interest, as well as certain prior information like the layout of obstacles, to estimate radio maps.
In this paper, we focus on the sampling-based radio map estimation problem.
This problem is challenging because of ultra-low sampling rates: The high cost of capable radio sensing hardware severely limits the number of sensors that can be deployed in practice.
Consequently, radio samples are ultra-sparse compared to the high resolution of radio maps to be estimated.

Traditional radio map estimation approaches are based on physics modeling or interpolation, which utilize mathematical and statistical methodologies for estimation.
Well-known methods include Ordinary Kriging (OK)~\cite{kriging}, Radial Basis functions (RBF)~\cite{rbf}, and
Nearest Neighbor (NN)~\cite{KNN}:
\begin{itemize}
    \item OK: It uses semivariogram functions to describe the spatial autocorrelation and estimates the RSS values of unknown points by weighted average of the observed data of known points.

    \item RBF: By creating a network of basis functions around the known data points, it uses a linear combination of these basis functions to approximate the RSS values of unknown points.

    \item NN: It utilizes a limited set of known radio samples to estimate RSS values at unknown locations, relying on the nearest neighbor principle.
\end{itemize}

Other interpolation-based approaches include Spline~\cite{spline} and Inverse Distance Weighting (IDW)~\cite{IDW}.
However, these methods often fail to construct high-quality radio maps for real-world environments that are dynamic and geographically complex.
This is because they rely on mathematical calculations and do not account for certain critical prior information, such as the layout of obstacles, which can significantly impact radio signal propagation.

Recent advances in generative AI lead to a significant shift in how radio maps can be estimated.
There are increasingly works proposing generative AI-based algorithms for radio map estimation, surpassing traditional physics modeling-based or interpolation-based approaches. 
Well-known methods include CollaboRadio~\cite{CollaboRadio} and RME-GAN~\cite{RME-GAN}.
\begin{itemize}
    \item CollaboRadio: It is a device-edge-cloud collaboration framework, with a UNet-based small model and a Transformer-based large model proposed.

    \item RME-GAN: It is a two-phase conditional Generative Adversarial Network (GAN) framework that integrates model-based interpolation and learning-based refinement for radio map estimation.
\end{itemize}

Other generative AI-based approaches include RadioMamba \cite{RadioMamba}, TVR \cite{TVR}, and UniRM \cite{UniRM}.
Among various generative AI methods, diffusion model, a flagship technique in modern generative AI, progressively adds and then reverses noise to learn complex data distributions.
It has been successfully applied to radio map estimation, achieving superior performance. Well-known diffusion-based approaches include WiFi-Diffusion~\cite{wifi-diffusion}, RadioDiff~\cite{RadioDiff}, RM-Gen~\cite{RM-Gen}, IRDM~\cite{IRDM}.

\begin{itemize}
    \item WiFi-Diffusion: It first uses a diffusion model to generate a candidate set of radio maps with diverse qualities, and then employs principles of radio propagation to identify the most accurate map.

    \item RadioDiff: It is a decoupled diffusion model with an adaptive fast Fourier transform module used for radio map estimation.

    \item RM-Gen: It is a conditional denoising diffusion probabilistic model that generates complete radio maps using sparse RSS fragments and transmitter locations, incorporating an environment-aware method for selecting critical data pieces.
    
    \item IRDM: It is a generative diffusion model designed for indoor radio map interpolation that recovers complete path loss maps from a limited number of reference points, using a multi-stage training strategy and online data augmentation.
\end{itemize}

For these diffusion-based radio map estimation studies, they all adhere to a common profile: they designed diffusion-based models and evaluated them with large-scale simulated data or small-scale real-world data.
However, no theoretical analysis on the performance of the proposed algorithms has been carried out, leaving their generalization performance to unseen data unverified. 

To the best of our knowledge, the sole work addressing the theoretical performance of radio map estimation is by Romero \emph{et al}.~\cite{RME-theory}.
Their study first provides a theoretical analysis of the spatial variability of radio maps and then derives upper bounds for the estimation error of certain interpolation-based methods.
However, the results in~\cite{RME-theory} are limited to free-space propagation environments and are specific to interpolation-based techniques.
As such, they are not applicable to practical, complex environments where factors such as obstacle layouts critically impact propagation.
They cannot be extended to diffusion models, either.

\section{Problem Definition: Radio Map Estimation}\label{sec:rem}

\begin{figure}[]
	\centering
	\includegraphics[width=\linewidth]{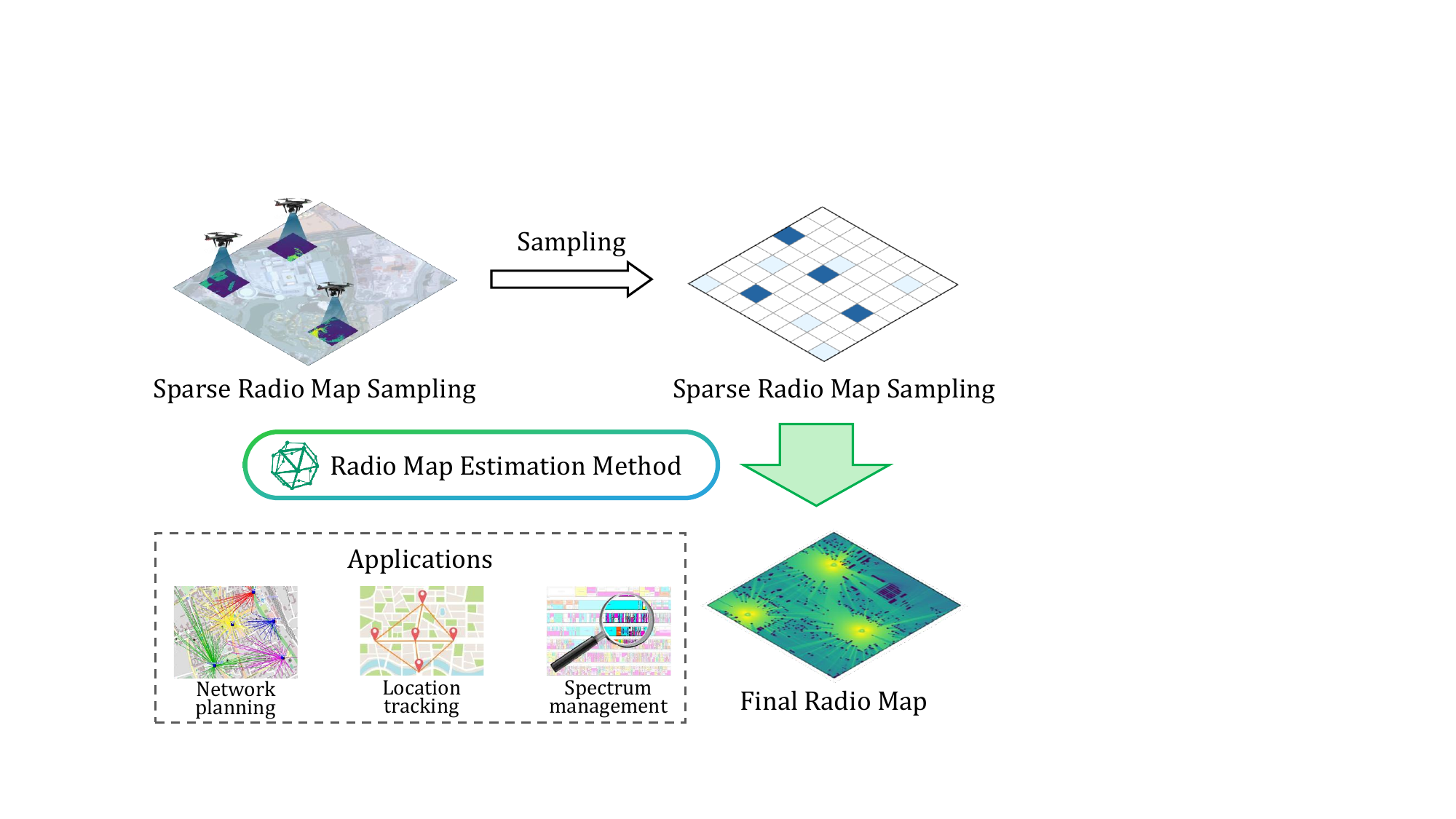}
	\caption{An illustration of the radio map estimation problem.}
	\label{fig:REM}
\end{figure}

Consider a two-dimensional (2D) region of interest $\mathcal{X}\subset\mathbb{R}^2$. 
A set $\mathcal{S}=\{1,2,\cdots,S\}$ of $S$ sensors is deployed over $\mathcal{X}$, where each sensor $s\in\mathcal{S}$ is located at a distinct coordinate $x_s\in\mathcal{X}$.
Each sensor $s$ measures RSS $r_s$ in a specific frequency band at its location $x_s$ (it is assumed that fast-fading effects are averaged out through minor temporal or spatial variations during measurement).
These radio samples (RSS measurements) $\{r_s,\forall s\in\mathcal{S}\}$ are reported to a fusion center, which may be, e.g., a base station or a cloud server, depending on the application.

Consider that the continuous space $\mathcal{X}$ is discretized into an $m\times n$ grid.
A radio map can be represented by an $m\times n$ matrix $\mathbf{M}$, where its element at index $(h,w)$ ($h\in\{0,1,\cdots,m-1\}$ and $w\in\{0,1,\cdots,n-1\}$) corresponds to the RSS at the respective spatial coordinate.
Let $\Psi$ denote an $m\times n$ binary mask, where its element at index $(h,w)$ is 1 if a sensor is located at $(h,w)$; otherwise, it is $0$.
Denote $\odot$ as the Hadamard product.
Then, the partial observation matrix $\mathbf{M}\odot\Psi$ yields an $m\times n$ matrix whose non-zero entries correspond to the RSS values measured by the sensors at their respective locations.  

The problem of radio map estimation requires to use the sample collected by sensors sparsely distributed within $\mathcal{X}$, i.e., use the set $\{r_s,\forall s\in\mathcal{S}\}$, to estimate the radio map $\mathbf{M}$.
Specifically, as depicted in Fig.~\ref{fig:REM}, radio map estimation requires the fusion center to use the incomplete observations $\mathbf{M}\odot\Psi$ to estimate $\mathbf{M}$ which presents the RSS at every location within the region $\mathcal{X}$.

\section{Preliminaries: Matrix Completion}\label{sec:Preliminaries}
Our analysis of using diffusion models to solve the radio map estimation problem is based on a matrix completion-based formulation.
In order to describe the formulation, it is necessary to first introduce matrix completion in this section.

The problem of matrix completion requires to recover missing entries in partially observed matrices~\cite{MC}. 
It has been widely applied to numerous tasks, such as collaborative filtering \cite{collaborative}, image recovery and inpainting \cite{inpainting}, and classification \cite{classification}. 

\subsection{Low-Rank Matrix Completion}
A fundamental assumption underlying traditional matrix completion methods is that the target data matrix is low-rank. This low-rank property allows missing entries to be recovered by minimizing the matrix rank.

\begin{figure}[]
	\centering
	\includegraphics[width=1\linewidth]{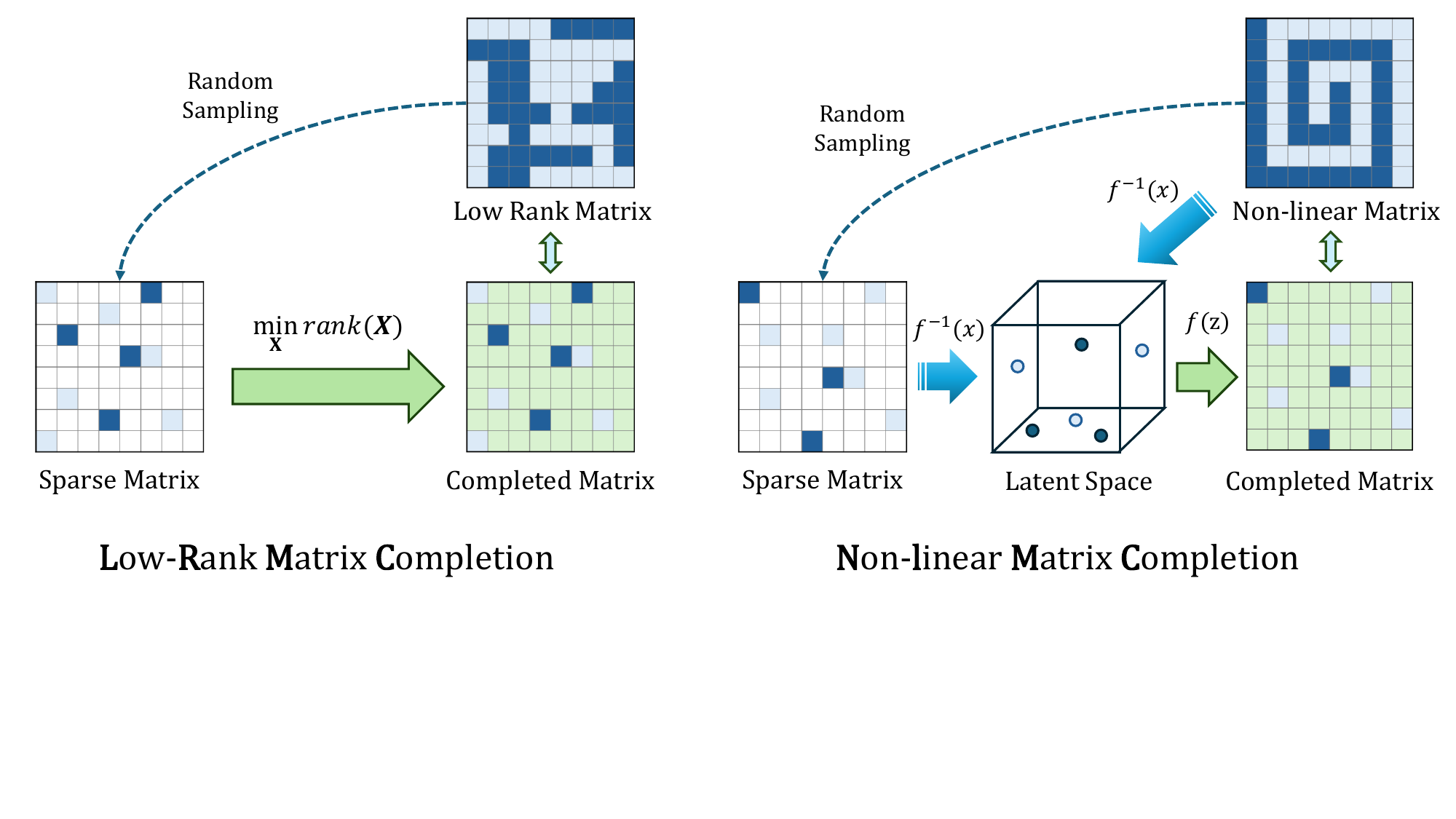}
    \caption{An illustration of the matrix completion problem.}
	\label{fig:MC}
\end{figure}

Specifically, theoretical studies have shown that, under certain constraints on the missing rate, matrix rank, and sampling scheme, missing entries in a low-rank matrix can be exactly recovered by addressing a rank minimization problem~\cite{LRMC}. 
Mathematically, for a matrix $\mathbf{M}_{m\times n}$ and an observation position set $\Omega$, define a sparse observation mask as  
$$[\Psi]_{ij} =  
\begin{cases}  
1 & \text{if } (i,j) \in \Omega \\  
0 & \text{otherwise}.  
\end{cases}$$  
Then, for matrix $\mathbf{X}$ to be recovered, the low-rank matrix completion problem can be formulated as:  
\begin{equation}
    \begin{aligned}  
&\min_{\mathbf{X}} \text{rank}(\mathbf{X}) \\  
&\text{subject to } \mathbf{X} \odot \Psi = \mathbf{M} \odot \Psi.  
\end{aligned} \nonumber 
\end{equation}
where $\mathbf{X}_{m\times n}$ is the decision variable.

\subsection{Non-Linear Matrix Completion}
Traditional matrix completion methods often fail to recover missing entries in partially observed matrices by minimizing the matrix rank when the matrix has a high rank.
To address this and recover missing elements in high-rank matrices, non-linear matrix completion jointly optimizes both a non-linear mapping from the high-rank matrix to a low-dimensional latent space and the latent structure~\cite{NLMC}.
They assume that the data of the high-rank matrix is given by a non-linear latent variable model, and have proved that minimizing the latent space dimension can exactly solve the matrix completion problem.

Mathematically, let $f(\cdot)$ be a non-linear transformation that relates a low-dimensional latent space and the high-rank matrix. 
The non-linear matrix completion problem is:  

\begin{equation}
    \begin{aligned}  
&\min_{\mathbf{X}} \text{dim}(\mathbf{Z}) \\  
&\text{subject to } \mathbf{X}=f(\mathbf{Z}),\ \mathbf{X}\odot \Psi  = \mathbf{M} \odot \Psi.  
\end{aligned}\nonumber
\end{equation}
where $\mathbf{Z}$ is the latent space variable.

\textcolor{black}{For the above problem, let the sampling rate be $\rho=\frac{|\Omega|}{mn}$. A necessary condition for the problem to be solvable is~\cite{NLMC-DMF}:
\begin{equation}
    \text{rank}(\mathbf{Z})\leq \rho\cdot \min(m,n).\label{eq:rank}
\end{equation}}

Depending on the non-linear mapping model, existing approaches for non-linear matrix completion can be basically categorized into two categories, i.e., kernel trick approaches~\cite{NLMC} and Deep Matrix Factorization (DMF) approaches~\cite{NLMC-DMF}.

\smallskip
\noindent \textbf{Kernel Trick-Based Non-Linear Matrix Completion} \quad
It maps the sparsely sampled matrix into a high-dimensional space using kernel methods, thereby constructing a high-dimensional yet low-rank matrix. Matrix completion is then performed in this high-dimensional space via low-rank matrix completion. 

Specifically, to handle non-linearity, a non-linear function \( \phi(\cdot) \) is defined to map \( X \) into a feature space \( \mathcal{F} \):
$\phi(\cdot) : \mathbb{R}^m \to \mathcal{F}^l,$
where \( \mathcal{F} \) is an inner product space whose dimension \( l \) can be arbitrarily large or even infinite. Data in \( \mathcal{F} \) can then be processed using linear techniques. The matrix formed by the data in the feature space is
\[
W = \Phi(X) = [\phi(x_1), \phi(x_2), \cdots, \phi(x_n)],
\]
which is expected to be low-rank or approximately low-rank. This implies that in the high-dimensional feature space, \( W \) originates from a low-dimensional subspace and can be represented by a linear transformation of a small number of latent variables. This aligns with the situation in the input space, where \( X \) is redundant and can be expressed via a non-linear transformation \( f(\cdot) \) of a small set of latent variables \( Z \). 
Therefore, minimizing the rank of \( W \) is equivalent to minimizing the dimension of non-linear latent variables of \( X \).

As a result, the missing entries of \( X \) can be recovered by solving the following problem:

\begin{equation}
    \min_{X} \text{rank}(W), \ \text{s.t.} \ W = \Phi(X),\ \mathbf{X}\odot \Psi  = \mathbf{M} \odot \Psi, \ (i, j) \in \Omega,\nonumber
\end{equation}

This problem is NP-hard but can be approximated as:

\begin{equation}
    \min_{X} \|W\|_{S_p}^p, \  \text{s.t.} \  W = \Phi(X), \  \mathbf{X}\odot \Psi  = \mathbf{M} \odot \Psi, \  (i, j) \in \Omega.\nonumber
\end{equation}

Here, the Schatten \( p \)-norm serves as a non-convex relaxation of the rank, and is defined as:

\begin{equation}
    \|W\|_{S_p} = \left( \sum_{i=1}^{\min(n,l)} (\sigma_i(W))^p \right)^{1/p}.\nonumber
\end{equation}

Kernel trick-based non-linear matrix completion approaches are limited by their computational inefficiency and moderate reconstruction performance.

\smallskip
\noindent \textbf{DMF-Based Non-Linear Matrix Completion} \quad
It builds upon the non-linear latent variable model by constructing a deep-structured neural network. 
It takes low-dimensional unknown latent variables as input and partially observed variables as output. 
By simultaneously optimizing the latent variables and network parameters to minimize the reconstruction error of the observed entries, it recovers the missing ones.
Compared to kernel trick-based approaches, DMF-based approaches have significantly lower computational complexities and are suitable for the completion of large-scale matrices. 
The formal expression of DMF-based non-linear matrix completion is:
\[
\begin{split}
&\min_{Z,f} \pi(f) + \frac{\beta}{2n}\|Z\|_{F}^{2}, \\
&\text{s.t. } [f(Z)]_{i,j} = X_{i,j},\ (i,j) \in \Omega,
\end{split}
\]
where \(\pi(f)\) denotes the penalty or constraint on \(f(\cdot)\), and \(\beta\) is the regularization parameter for \(Z\). Regularization of both \(f(\cdot)\) and \(Z\) is necessary since there are infinitely many solutions of \(f(\cdot)\) and \(Z\) that can reproduce the observed entries of \(X\). 

In practice, the observed entries of \(X\) are often contaminated by small noise, and it is impractical for \([f(Z)]_{i,j}\) to exactly equal \(X_{i,j}\). It suffices to minimize the discrepancy between \([f(Z)]_{i,j}\) and \(X_{i,j}\):
\[
\min_{Z,f} \frac{1}{2n} \|\Psi \odot (X - f(Z))\|_{\mathrm{F}}^{2} + \frac{\beta}{2n} \|Z\|_{\mathrm{F}}^{2} + \lambda \pi(f).
\]
where \(\lambda\) is a regularization parameter. 
The non-linear function can be approximated using a neural network:
\[
\begin{split}
&\min_{Z,\Theta} \frac{1}{2n} \sum_{i=1}^{n} \| \Psi_{i} \odot (x_{i} - g^{(K+1)}(g^{(K)}(\cdots g^{(1)}(z_{i},\Theta^{(1)}) \\
&\qquad \cdots, \Theta^{(K)}), \Theta^{(K+1)})) \|^{2} + \frac{\beta}{2n} \|Z\|_{\mathrm{F}}^{2} + \frac{\lambda}{2} \sum_{j=1}^{K+1} \|W^{(j)}\|_{\mathrm{F}}^{2},
\end{split}
\]
where \(\Theta^{(j)} = (W^{(j)}, b^{(j)})\), \(g^{(j)}(t,\Theta^{(j)}) = g^{(j)}(W^{(j)}t + b^{(j)})\), for \(j=1, 2, \cdots, K+1\), and \(K\) is the number of hidden layers.

Although DMF-based methods offer theoretical advantages (e.g., guaranteed complete recovery above a certain sampling threshold), they remain limited when sampling falls below that threshold, where insufficient information prevents exact reconstruction.
Furthermore, DMF differs fundamentally from typical deep learning models like CNNs used for image classification. CNNs are trained on massive image datasets to learn general feature extraction capabilities applicable to all images. In contrast, DMF aims to precisely capture the internal structure and patterns of a specific matrix. It is tailored only to that particular matrix and is not designed to generalize to other unrelated matrices.

\section{Matrix Completion-Based Formulation for Diffusion-Based Radio Map Estimation}\label{sec:rem-matrix}
Although standard non-linear matrix completion can take sparse sensor measurements into consideration, they fail to account for certain prior information that is critical for radio map estimation, e.g., well-established radio propagation principles.
To address this limitation, we are motivated by Inductive Matrix Completion (IMC)~\cite{IMC}, which utilizes side information to construct non-linear mappings. In this section, we thereby introduce a non-linear matrix completion formulation for the problem of using diffusion models to estimate radio maps.

\begin{figure}[]
	\centering
	\includegraphics[width=1\linewidth]{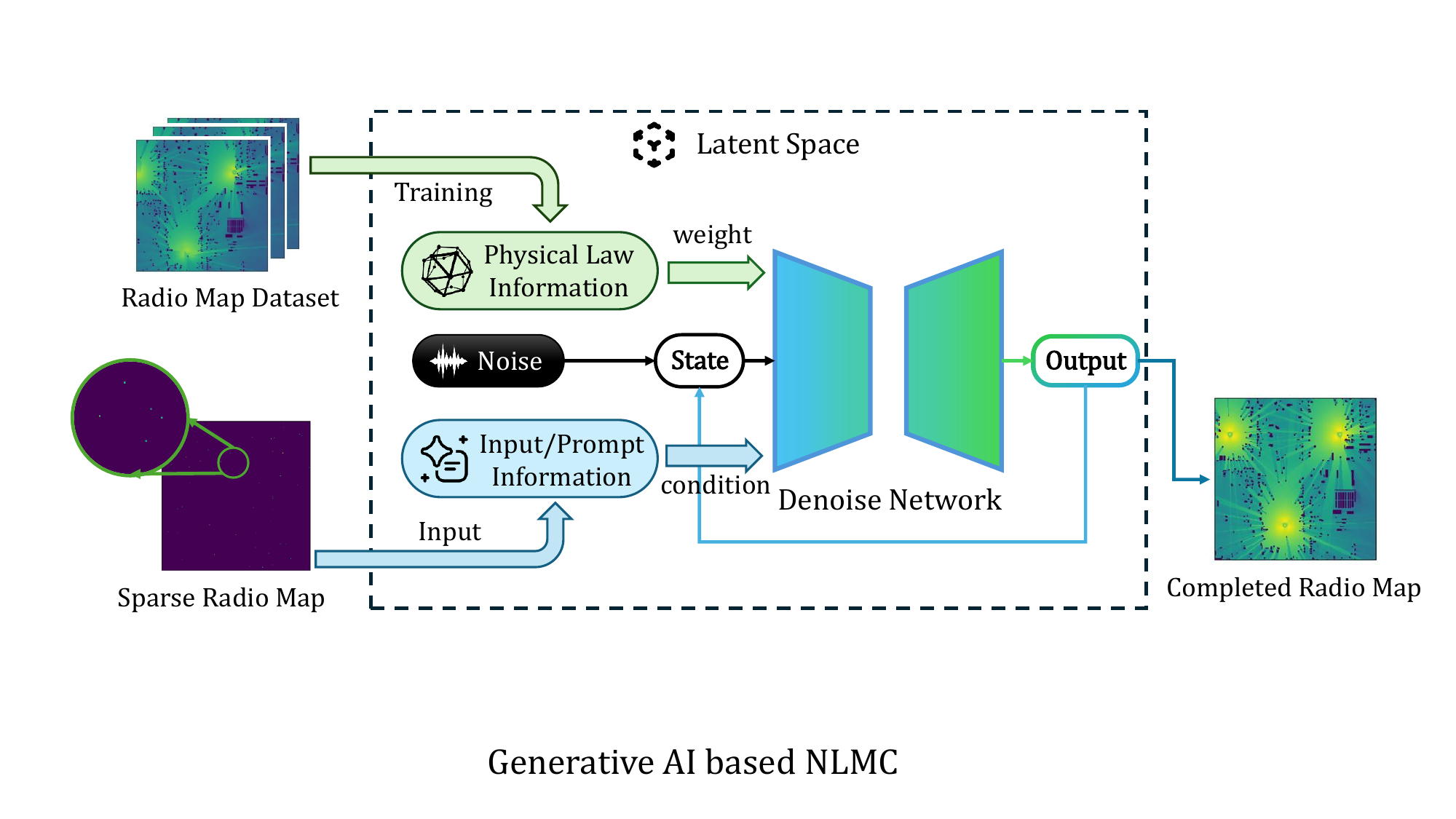}
    \caption{An illustration of using diffusion models for radio map estimation.}
	\label{fig:GAIMC}
\end{figure}

A radio map is in fact a high-rank matrix of RSS values, derived from a latent and potentially unknown propagation law conditioned on environmental factors, such as building layouts and terrain, which influence signal behavior.
As shown in Fig.~\ref{fig:GAIMC}, when a diffusion model is applied to radio map estimation, it learns the complex mapping from radio propagation law to RSS matrices during training, using a dataset that encodes the propagation law alongside ground-truth RSS matrices. 
This learning process is aligned intuitively with the problem of non-linear matrix completion.

Mathematically, let $\mathbf{M}_{m\times n}$ denote the target radio map to be estimated, and let $G_\theta(\cdot)$ represent a diffusion model parameterized by $\theta$.
Let $\mathcal P$ be the underlying distribution of radio maps, governed by a latent and potentially unknown radio propagation law, and let $\mathbf Z$ denote the input noise to the diffusion model.
Given a binary observation mask $\Psi$ indicating sensor locations, the problem of radio map estimation by diffusion models can be formulated as:

\begin{equation}
    \begin{aligned}&\min_{\mathbf{\theta, \mathbf{X}}}\mathbf{W}(G_\theta, \mathcal{P})  ,  \\&\text{subject to } \mathbf{X}=G_\theta(\mathbf{Z};\mathbf{M} \odot \Psi),\ \mathbf{X}\odot \Psi  = \mathbf{M} \odot \Psi.\end{aligned}\label{equ:GAIMCytarget}
\end{equation}
where \textcolor{black}{$\mathbf{W}$ denotes the Wasserstein distance}, which measures the similarity between different distributions. 
Specifically, for two probability measures $\mu, \nu \in P(\mathbb{R}^d)$,  
\begin{equation}
\mathbf{W}(\mu, \nu) = \inf_{\gamma \in \Gamma(\mu, \nu)} \int_{\mathbb{R}^d \times \mathbb{R}^d} \|x - y\| \, d\gamma(x, y)
\end{equation}
where $\Gamma(\mu, \nu)$ represents the set of all coupling measures with marginals $\mu$ and $\nu$.

\section{Convergence Analysis of Diffusion Models for Radio Map Estimation}\label{sec:con}


Diffusion models address the radio map estimation problem by gradually injecting noise to smoothly transform the complex data distribution of radio propagation laws into a known prior distribution, and then reversing the process by gradually removing noise to reconstruct the data distribution from the prior.
This idea aligns with the framework of Stochastic Differential Equations (SDEs), \textcolor{black}{which is the basic mathematical paradigm of diffusion models~\cite{SDE}}.
In this section, we establish the convergence of the diffusion model for radio map estimation using the SDE optimization theory, providing a theoretical foundation for the analysis of the estimation errors of diffusion models in subsequent sections.

The objective of the diffusion model is to construct a diffusion process $\{ \mathbf{x}(t) \}_{t=0}^T$, indexed by a continuous time variable $t \in [0, T]$, such that $\mathbf{x}(0) \sim p_0 = \mathcal{P}$ (the data distribution of radio propagation) and $\mathbf{x}(T) \sim p_T$ (a tractable prior distribution with a known analytic form). This diffusion process can be modeled as the solution to an It\^{o} SDE:
\begin{equation}
\mathrm{d}\mathbf{x} = \mathbf{f}(\mathbf{x}, t)\,\mathrm{d}t + g(t)\,\mathrm{d}\mathbf{w},
\end{equation}
where $\mathbf{w}$ denotes the standard Wiener process (also known as Brownian motion), $\mathbf{f}(\cdot, t): \mathbb{R}^d \to \mathbb{R}^d$ is a vector-valued function referred to as the drift coefficient of $\mathbf{x}(t)$, and $g(\cdot): \mathbb{R} \to \mathbb{R}$ is a scalar function known as the diffusion coefficient of $\mathbf{x}(t)$. 
As long as these coefficients are globally Lipschitz continuous in both state and time, the SDE admits a unique strong solution~\cite{Oksendal2003}. 

Let $p_t(\mathbf{x})$ be the probability density of $\mathbf{x}(t)$, and $p_{st}(\mathbf{x}(t)\,|\,\mathbf{x}(s))$ be the transition kernel from $\mathbf{x}(s)$ to $\mathbf{x}(t)$ for $0 \le s < t \le T$. Typically, $p_T$ is an unstructured prior that contains no information about $p_0$, such as a Gaussian distribution with fixed mean and variance.
By initializing from samples $\mathbf{x}(T) \sim p_T$ and reversing the diffusion process, we can obtain samples $\mathbf{x}(0) \sim p_0$. A seminal result by Anderson~\cite{Anderson1982} establishes that the reverse of a diffusion process is itself a diffusion process evolving backward in time, governed by the following reverse-time SDE:
\begin{equation}
\mathrm{d}\mathbf{x} = \left[ \mathbf{f}(\mathbf{x}, t) - g(t)^2 \nabla_{\mathbf{x}} \log p_t(\mathbf{x}) \right] \mathrm{d}t + g(t)\,\mathrm{d}\bar{\mathbf{w}},
\label{equ:rSDE}
\end{equation}
where $\bar{\mathbf{w}}$ is a standard Wiener process when time flows backwards from $T$ to $0$, and $\mathrm{d}t$ represents an infinitesimal negative time step. Once the score $\nabla_{\mathbf{x}} \log p_t(\mathbf{x})$ of each marginal distribution is known for all $t$, the reverse diffusion process can be derived from Eq.~\eqref{equ:rSDE} and simulated to draw samples from $p_0$.

The score of a distribution can be estimated by training a score-based model on data samples using score matching~\cite{Song2019}. To estimate $\nabla_{\mathbf{x}} \log p_t(\mathbf{x})$, one can train a time-dependent score-based model $\mathbf{s}_{\boldsymbol{\theta}}(\mathbf{x}, t)$ via a continuous generalization of noise
conditional score network (NCSN) training target \cite{NCSN}, and evidence lower bound (ELBO) of denoising diffusion probabilistic models (DDPM)~\cite{DDPM}:
\begin{equation}
\begin{aligned}
    \boldsymbol{\theta}^* &= \arg\min_{\boldsymbol{\theta}} \mathbb{E}_{t} \{ \lambda(t) \mathbb{E}_{\mathbf{x}(0)} \mathbb{E}_{\mathbf{x}(t) \mid \mathbf{x}(0)} [ \| \mathbf{s}_{\boldsymbol{\theta}}(\mathbf{x}(t), t)  \\& - \nabla_{\mathbf{x}(t)} \log p_{0t}(\mathbf{x}(t) \mid \mathbf{x}(0)) \|_2^2 ] \}.
\end{aligned}
\label{equ:SDEtarget}
\end{equation}
Here, $\lambda: [0, T] \rightarrow \mathbb{R}_{>0}$ is a positive weighting function, $t$ is uniformly sampled over $[0, T]$, $\mathbf{x}(0) \sim p_0(\mathbf{x})$, and $\mathbf{x}(t) \sim p_{0t}(\mathbf{x}(t) \mid \mathbf{x}(0))$. Given sufficient data and model capacity, score matching guarantees that the optimal solution to Eq.~\eqref{equ:SDEtarget}, denoted as $\mathbf{s}_{\boldsymbol{\theta}^*}(\mathbf{x}, t)$, equals to $\nabla_{\mathbf{x}} \log p_t(\mathbf{x})$ for almost all $\mathbf{x}$ and $t$.

In practice, knowledge of the transition kernel $p_{0t}(\mathbf{x}(t) \mid \mathbf{x}(0))$ is typically required to efficiently optimize Eq.~\eqref{equ:SDEtarget}. 
When $\mathbf{f}(\cdot, t)$ is affine, this kernel is Gaussian with closed-form expressions for its mean and variance, obtainable via standard techniques. For instance, the perturbation kernels $\{p_{\alpha_i}(\mathbf{x} \mid \mathbf{x}_0)\}_{i=1}^{N}$ used in DDPM, one well-known diffusion model, correspond to the discrete Markov chain:
\begin{equation}
\mathbf{x}_i = \sqrt{1 - \beta_i}\, \mathbf{x}_{i-1} + \sqrt{\beta_i}\, \mathbf{z}_{i-1}, \quad i = 1, \dots, N.
\end{equation}
In the limit as $N \rightarrow \infty$, this discrete process converges to the following SDE:
\begin{equation}
\mathrm{d}\mathbf{x} = -\frac{1}{2} \beta(t)\, \mathbf{x}\, \mathrm{d}t + \sqrt{\beta(t)}\, \mathrm{d}\mathbf{w}.
\end{equation}

To ensure that the above optimization process in Eq.~\eqref{equ:SDEtarget} effectively solves the problem of radio map estimation formulated in Eq.~\eqref{equ:GAIMCytarget}, it is essential to establish the convergence of the diffusion model.
Bortoli~\cite{convergence} have proved that if the following $4$ assumptions hold:

\begin{itemize}
    \item \textbf{A1.} The data distribution $\mathcal{P}$ is supported on a compact set $\mathcal{M}$, and $0 \in \mathcal{M}$,

    \item \textbf{A2.} $t \mapsto \beta_{t}$ is continuous, non-decreasing, and there exists $\bar{\beta} > 0$ such that for any $t \in [0,T]$, $1/\bar{\beta} \leq \beta_{t} \leq \bar{\beta}$,

    \item \textbf{A3.} There exist $\boldsymbol{s} \in \mathrm{C}([0,T] \times \mathbb{R}^{d}, \mathbb{R}^{d})$ and $\mathtt{M} \geq 0$ such that for any $t \in [0,T]$ and $x_{t} \in \mathbb{R}^{d}$,
\begin{equation}
\|\boldsymbol{s}(t, x_{t}) - \nabla \log p_{t}(x_{t})\| \leq \mathtt{M}(1 + \|x_{t}\|)/\sigma_{t}^{2},
\end{equation}

    \item \textbf{A4.} $\gamma_{k} \sup_{v \in [T-t_{k+1}, T-t_{k}]} \beta_{v}/\sigma_{v}^{2} \leq \delta \leq 1/2$, for any $k \in \{0,\ldots,K-1\}$,
\end{itemize}

\noindent and $T \geq 2\bar{\beta}(1 + \log(1 + \mathrm{diam}(\mathcal{M})))$, $\gamma_{K} = \varepsilon$, and $\varepsilon, \mathtt{M}, \delta \leq 1/32$, there exists $\mathtt{D}_{0} \geq 0$ such that
\begin{equation}
\begin{aligned}
    \mathbf{W}(G_\theta, \mathcal{P}) \leq& \mathtt{D}_{0}(\exp[\kappa/\varepsilon](\mathtt{M} + \delta^{1/2})/\varepsilon^{2} \\&+ \exp[\kappa/\varepsilon]\exp[-T/\bar{\beta}] + \varepsilon^{1/2}),
\end{aligned}
\end{equation}
where $\kappa = \mathrm{diam}(\mathcal{M})^{2}(1 + \bar{\beta})/2$ and
\begin{equation}
\mathtt{D}_{0} = D(1 + \bar{\beta})^{7}(1 + d + \mathrm{diam}(\mathcal{M})^{4})(1 + \log(1 + \mathrm{diam}(\mathcal{M}))),
\end{equation}
where $D$ is a constant, $\mathrm{diam}(\mathcal{M})$ denotes the diameter of the manifold, defined as $\mathrm{diam}(\mathcal{M}) = \sup\{\|x-y\| : x, y \in \mathcal{M}\}$, $\sigma_t$ is the noise scale of the forward SDE in the diffusion process, and $\varepsilon>0$ is a small hyperparameter.

For radio map estimation by diffusion models, \textbf{A1} holds because radio maps follow underlying radio propagation laws, even when these laws are not explicitly known. \textbf{A2} is satisfied in all scheduling schemes used in practice. The explosive behavior as $t \to 0$ considered in \textbf{A2} is properly accounted for in real-world implementations. \textbf{A3} and \textbf{A4} hold for the SDE. 
Consequently, the diffusion model is guaranteed to converge when applied to the problem of radio map estimation.

\section{Theoretical Radio Map Estimation Errors of Diffusion Models}\label{sec:bounds}

In this section, we analyze the theoretical performance of diffusion models for radio map estimation. First, recognizing that any estimated radio map will inevitably deviate from the ground truth due to discrepancies between the deployment domain distribution and the true underlying radio propagation law, we derive a lower bound on the optimality gap (minimum estimation error) achievable by a diffusion model.
Next, we extend this lower bound to take sampling rate into consideration, thereby capturing the additional error introduced by ultra-sparse sampling.
Finally, we establish a sampling rate threshold above which the diffusion model achieves performance convergence.

\subsection{Minimum Estimation Error}

In the context of generative AI, deployment domain distribution refers to the data distribution encountered when a trained model is deployed in a real-world environment. 
For radio map estimation, variations in deployment scenarios can cause the deployment domain distribution to deviate from the true underlying radio propagation law.
Let $\widetilde{\mathcal{P}}$ denote the deployment domain distribution. 
Then, $\mathbf{W}(G_\theta, \widetilde{\mathcal{P}})$ characterizes the minimum error $E_{min}$ of using diffusion models for radio map estimation.
According to the convergence theory of diffusion models, with a suitable estimator, the convergence effect can be related to the recovery effect measured by Mean Squared Error (MSE)~\cite{convergenceMSE}, i.e., we can have:

\begin{equation}\label{eqn:gap}
E_{min}=||\mathbf X_{max}-\widetilde{\mathbf M}||^2,
\end{equation}
where $\mathbf X_{max}$ is the best map (matrix) reconstructed by diffusion models, and $\widetilde{\mathbf M}$ is the map (matrix) corresponding to the deployment domain distribution.

Based on the convergence theory of diffusion models~\cite{SDE, convergence}, which relates the Wasserstein distance to the convergence effect, we derive the following theorem, which establishes the minimum estimation error achievable by diffusion models.

\begin{theorem}
The minimum estimation error $E_{\text{min}}$ (defined in Eq.~\eqref{eqn:gap}) of using diffusion models for radio map estimation satisfies:
\begin{equation}\label{eqn:EMIN}
E_{\text{min}} = \mathcal R(\mathcal{P}, \widetilde{\mathcal{P}}) \cdot \mathbf{W}(\mathcal{P}, \widetilde{\mathcal{P}}) + \zeta,
\end{equation}
where $\mathcal R(\mathcal{P}, \widetilde{\mathcal{P}})$ is an influence factor related to the manifold geometry of distributions $\mathcal{P}$ and $\widetilde{\mathcal{P}}$, used to scale the distribution error to the specific data error and defined as:
\begin{equation}
\begin{aligned}
    \mathcal R(\mathcal{P}, \widetilde{\mathcal{P}}) ~=~& C \cdot \left( \frac{\mathrm{diam}(\mathcal{A}) + \mathrm{diam}(\widetilde{\mathcal{A}})}{2} \right) \\&\cdot \left(1 + \log\left(1 + \mathbf{W}(\mathcal{P}, \widetilde{\mathcal{P}})\right)\right)
\end{aligned}
\end{equation}
Here $\mathcal{A}$ and $\widetilde{\mathcal{A}}$ are the supporting manifolds of distributions $\mathcal{P}$ and $\widetilde{\mathcal{P}}$ respectively, $\mathrm{diam}(\cdot)$ denotes the diameter of the manifold, and $C$ is a constant related to the diffusion model parameters. $\zeta$ is an error term reflecting the approximation error between the diffusion model $G_\theta$ and the training distribution $\mathcal{P}$.\label{T:max}
\end{theorem}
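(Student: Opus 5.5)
The plan is a triangle-inequality decomposition in Wasserstein space, followed by a conversion from distributional error to data (MSE) error through a geometric scaling factor. Following the paper's identification $E_{\min}\leftrightarrow \mathbf{W}(G_\theta,\widetilde{\mathcal P})$, I would start from
\begin{equation*}
\mathbf{W}(G_\theta,\widetilde{\mathcal P}) \le \mathbf{W}(G_\theta,\mathcal P)+\mathbf{W}(\mathcal P,\widetilde{\mathcal P}),
\end{equation*}
together with the reverse inequality $\mathbf{W}(G_\theta,\widetilde{\mathcal P}) \ge \mathbf{W}(\mathcal P,\widetilde{\mathcal P})-\mathbf{W}(G_\theta,\mathcal P)$. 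This two-sided bound sandwiches the deployment error around the mismatch term $\mathbf{W}(\mathcal P,\widetilde{\mathcal P})$. The remaining discrepancy is controlled by the training error $\mathbf{W}(G_\theta,\mathcal P)$. By the convergence result of Bortoli stated in Section~\ref{sec:con}, under \textbf{A1}--\textbf{A4} this training error is bounded by $\mathtt D_0(\cdots)$, which vanishes as $\mathtt M,\delta,\varepsilon\to 0$ and $T\to\infty$. That bounded term is what I would absorb into $\zeta$.

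Next I would translate the distributional error into the squared matrix error $\|\mathbf X_{\max}-\widetilde{\mathbf M}\|^2$ of Eq.~\eqref{eqn:gap}. I would take the optimal coupling $\gamma^\ast$ between $\mathcal P$ and $\widetilde{\mathcal P}$, supported on $\mathcal A\times\widetilde{\mathcal A}$. For a coupled pair $(x,y)$, write $\|x-y\|^2=\|x-y\|\cdot\|x-y\|$ and bound one factor by a diameter, so that $\mathbb E_{\gamma^\ast}\|x-y\|^2 \le \mathrm{diam}\cdot \mathbf W(\mathcal P,\widetilde{\mathcal P})$. Because the two points lie on different supports, I would pass through an intermediate point on each manifold and average the two diameter bounds. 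This is what produces the symmetric factor $(\mathrm{diam}(\mathcal A)+\mathrm{diam}(\widetilde{\mathcal A}))/2$. The best reconstruction $\mathbf X_{\max}$ is then taken as the image under $\gamma^\ast$ of the sample from $G_\theta\approx\mathcal P$, using the MSE/convergence link of~\cite{convergenceMSE}.

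The logarithmic factor $1+\log(1+\mathbf W(\mathcal P,\widetilde{\mathcal P}))$ I would obtain from the constant $\mathtt D_0$, which contains $(1+\log(1+\mathrm{diam}(\mathcal M)))$. When the diffusion model is run on the mixed support of training and deployment data, the relevant diameter grows by at most the transport distance $\mathbf W(\mathcal P,\widetilde{\mathcal P})$. The remaining factors $(1+\bar\beta)^7(1+d+\cdots)$ and $D$ collapse into the constant $C$. Combining the two steps gives $E_{\min}=\mathcal R\cdot \mathbf W+\zeta$. The equality should be read as the sandwich above, with all residual slack placed in $\zeta$.

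The main obstacle is the second step, where a first-order transport cost $\mathbf W$ must be turned into a squared error with a clean diameter prefactor. That step needs a bounded-support argument (\textbf{A1}), and the equality sign is only justified up to the error term. I would first try the pointwise bound $\|x-y\|^2\le \mathrm{diam}(\mathcal A\cup\widetilde{\mathcal A})\|x-y\|$. I would then split $\mathrm{diam}(\mathcal A\cup\widetilde{\mathcal A})$ into the averaged diameters plus a distance between the two supports, which is again controlled by $\mathbf W$ and yields the logarithmic correction.
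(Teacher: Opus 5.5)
Your first step is the same as the paper's. It is the two-sided triangle inequality $|\mathbf W(G_\theta,\widetilde{\mathcal P})-\mathbf W(\mathcal P,\widetilde{\mathcal P})|\le\mathbf W(G_\theta,\mathcal P)$, with the training error controlled by Bortoli's bound and placed in $\zeta$.

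\textbf{Gap in the second step.} The gap is where you try to \emph{derive} the prefactor $\mathcal R(\mathcal P,\widetilde{\mathcal P})$. Your splitting of $\mathrm{diam}(\mathcal A\cup\widetilde{\mathcal A})$ through nearest points gives the \emph{sum} of the diameters plus $\mathrm{dist}(\mathcal A,\widetilde{\mathcal A})\le\mathbf W$. Combining this bounded-support bound with Jensen yields only
\[
\mathbf W(\mathcal P,\widetilde{\mathcal P})^2\;\le\;\mathbb E_{\gamma^\ast}\|x-y\|^2\;\le\;\bigl(\mathrm{diam}(\mathcal A)+\mathrm{diam}(\widetilde{\mathcal A})+\mathbf W(\mathcal P,\widetilde{\mathcal P})\bigr)\,\mathbf W(\mathcal P,\widetilde{\mathcal P}).
\]
The separation between the supports enters \emph{additively}, as a $\mathbf W^2$ term. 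It does not produce the multiplicative factor $1+\log(1+\mathbf W)$, and it cannot be absorbed into $\mathcal R\cdot\mathbf W$.

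Take $\mathcal P=\delta_a$, $\widetilde{\mathcal P}=\delta_b$ with $a\neq b$, or uniform laws on tiny balls around $a$ and $b$. The diameters are zero (or arbitrarily small), so $\mathcal R\cdot\mathbf W$ is zero (or arbitrarily small) for every $C$. Yet every coupling has mean squared error at least $\mathbf W^2=\|a-b\|^2$. So even with $G_\theta=\mathcal P$ exactly, $\zeta$ would have to carry $\mathbf W^2$. That is not an approximation error between $G_\theta$ and $\mathcal P$, and it violates the paper's own bound $|\zeta|\le\mathcal R\,\epsilon_{\text{opt}}$. More generally, the slack between the two sides above is of the same order as the main term and does not vanish as $\mathbf W(G_\theta,\mathcal P)\to 0$. ``Placing all residual slack in $\zeta$'' therefore changes what $\zeta$ means in the statement.

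\textbf{The logarithmic factor.} Your $\mathtt D_0$ route fails for a different reason. In Bortoli's estimate, $\mathtt D_0$ multiplies the upper bound on $\mathbf W(G_\theta,\mathcal P)$, which is the term you already sent to $\zeta$, not the mismatch $\mathbf W(\mathcal P,\widetilde{\mathcal P})$. It also sits inside an inequality with an unspecified constant $D$, so it cannot supply an exact prefactor for an identity. Your third and fourth paragraphs also give two incompatible origins for the same log factor.

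\textbf{What the paper does.} The paper does not derive $\mathcal R$ at all. After the triangle inequality it \emph{defines} the geometric factor by loose analogy with the diameter dependence $(1+d+\mathrm{diam}(\mathcal A)^4)(1+\log(1+\mathrm{diam}(\mathcal A)))$ of $\mathtt D_0$. It chooses $1+\log(1+\mathbf W)$ so that the factor does not shrink too much for small shifts. It then bounds $|\zeta|\le\mathcal R\,\mathbf W(G_\theta^\ast,\mathcal P)\le\mathcal R\,\epsilon_{\text{opt}}$. To match the paper you should present $\mathcal R$ as that modelling choice. Your bounded-support argument can rigorously deliver only the two-sided inequality above, not the stated equality.
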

\begin{IEEEproof}
    Consider the training distribution $\mathcal{P}$ and the deployment distribution $\widetilde{\mathcal{P}}$, supported on compact manifolds $\mathcal{A}$ and $\widetilde{\mathcal{A}}$ respectively.
According to the triangle inequality property of the Wasserstein distance, we have:
\begin{equation}
|\mathbf{W}(G_\theta, \widetilde{\mathcal{P}})-\mathbf{W}(\mathcal{P}, \widetilde{\mathcal{P}})| \le \mathbf{W}(G_\theta, \mathcal{P}).\nonumber
\end{equation}
As a result, we have:
\begin{equation}
\mathbf{W}(G_\theta, \widetilde{\mathcal{P}})=\mathbf{W}(\mathcal{P}, \widetilde{\mathcal{P}})+\delta_w\nonumber
\end{equation}
where $\delta\epsilon \in [-\mathbf{W}(G_\theta, \mathcal{P}),\mathbf{W}(G_\theta, \mathcal{P})]$ is a random variable, whose absolute value decreases with convergence.

Thus, due to the influence of diffusion model convergence, the theoretical minimum error $E_{\min}$ is primarily determined by the inter-distribution distance $\mathbf{W}(\mathcal{P}, \widetilde{\mathcal{P}})$, but needs to be modulated by the manifold geometry to map from the distribution to the specific sampling error. Therefore, we establish:
\begin{equation}
E_{\text{min}} = \mathcal R(\mathcal{P}, \widetilde{\mathcal{P}}) \cdot \mathbf{W}(\mathcal{P}, \widetilde{\mathcal{P}}) + \zeta\nonumber
\end{equation}
where $\mathcal R(\mathcal{P}, \widetilde{\mathcal{P}})$ is an influence factor related to the manifold geometry of distributions $\mathcal{P}$ and $\widetilde{\mathcal{P}}$, and $\zeta$ represents the contribution of the generative model approximation error.

According to the convergence theory of diffusion models under the manifold assumption \cite{SDE,convergence}, the Wasserstein distance between the generative distribution $G_\theta$ and the target distribution $\mathcal{P}$ satisfies:
\begin{equation}
\begin{aligned}
    \mathbf{W}(G_\theta, \mathcal{P}) \leq& \mathtt{D}_{0}(\exp[\kappa/\varepsilon](\mathtt{M}+\delta^{1/2})/\varepsilon^{2} \\&+ \exp[\kappa/\varepsilon]\exp[-T/\bar{\beta}] + \varepsilon^{1/2})
\end{aligned}\nonumber
\end{equation}
where the constant $\mathtt{D}_{0}$ depends on the manifold diameter $\mathrm{diam}(\mathcal{A})$, dimension $d$, and diffusion parameters $\bar{\beta}$.

According to the convergence analysis of diffusion models on manifolds, the constant $\mathtt{D}_{0}$ has the form \cite{convergence}:
\begin{equation}
\mathtt{D}{0} = D(1+\bar{\beta})^{7}(1+d+\mathrm{diam}(\mathcal{A})^{4})(1+\log(1+\mathrm{diam}(\mathcal{A})))\nonumber
\end{equation}
This indicates that the convergence rate is closely related to the manifold diameter. For the distribution shift scenario, we need to consider the geometric characteristics of both the source manifold $\mathcal{A}$ and the target manifold $\widetilde{\mathcal{A}}$, so we can define the geometric factor:
\begin{equation}
\begin{aligned}
    \mathcal R(\mathcal{P}, \widetilde{\mathcal{P}}) =& C \cdot \left( \frac{\mathrm{diam}(\mathcal{A}) + \mathrm{diam}(\widetilde{\mathcal{A}})}{2} \right) \\&\cdot \left(1 + \log\left(1 + \mathbf{W}(\mathcal{P}, \widetilde{\mathcal{P}})\right)\right)
\end{aligned}\nonumber
\end{equation}
where $C$ incorporates constant terms such as dimension $d$ and diffusion parameters $\bar{\beta}$. The logarithmic term $1 + \log(1 + \mathbf{W}(\mathcal{P}, \widetilde{\mathcal{P}}))$ ensures that the geometric factor does not shrink excessively in cases of small shifts.

According to diffusion model convergence theory, the optimal performance of the generator is limited by its fit to the training distribution:
\begin{equation}
\mathbf{W}(G_\theta^*, \mathcal{P}) \leq \epsilon_{\text{opt}}\nonumber
\end{equation}
where $\epsilon_{\text{opt}}$ is the optimal convergence error of the diffusion model on the training distribution.

Therefore, the error term $\zeta$ satisfies:
\begin{equation}
|\zeta| \leq \mathcal R(\mathcal{P}, \widetilde{\mathcal{P}}) \cdot \mathbf{W}(G_\theta^*, \mathcal{P}) \leq \mathcal R(\mathcal{P}, \widetilde{\mathcal{P}}) \cdot \epsilon_{\text{opt}}\nonumber
\end{equation}

When $\mathcal{P}$ and $\widetilde{\mathcal{P}}$ are supported on the same manifold, $\mathrm{diam}(\mathcal{A}) = \mathrm{diam}(\widetilde{\mathcal{A}})$, the geometric factor simplifies to depend only on the diameter of a single manifold. When the distribution shift increases, $\mathbf{W}(\mathcal{P}, \widetilde{\mathcal{P}})$ increases, but the logarithmic term ensures a gentle growth of the geometric factor, which is consistent with the convergence behavior of diffusion models on manifolds.
\end{IEEEproof}

Theorem \ref{T:max} indicates that the minimum estimation error $E_{\text{min}}$ is primarily determined by the Wasserstein distance $\mathbf{W}(\mathcal{P}, \widetilde{\mathcal{P}})$ between the deployment distribution and the true underlying radio propagation law, and is scaled by the influence factor $\mathcal R(\mathcal{P}, \widetilde{\mathcal{P}})$. This factor captures the geometric differences in the underlying manifold structure of the data, and its theoretical form originates from the convergence bounds of diffusion models on compact manifolds. \textcolor{black}{Note that $E_{\text{min}}$ is a function of the random variable $\zeta$ that depends on the gap between the diffusion model with random noise as its input and the training distribution.}

\subsection{Sampling-Dependent Estimation Error}

Theorem \ref{T:max} can be used to measure the effectiveness of diffusion models for radio map estimation, as it characterizes a lower bound for the best optimality gap that can be achieved by a diffusion model in theory.
The optimality gap can be positive, indicating a discrepancy between the estimated and ground-truth radio maps, due to potential mismatches between the deployment domain distribution and the true underlying radio propagation law.
As discussed in Section~\ref{sec:introduction}, the problem of radio map estimation is challenging due to ultra-low sampling rates, where the number of available measurements is far smaller than the target map resolution.
Intuitively, the quality of radio maps constructed by diffusion models improves as the input sampling rate increases.
However, a major limitation of Theorem \ref{T:max} is that the derived estimation error does not explicitly consider the sampling rate.

Define the estimation error $E$ as
\begin{equation}
E=||\mathbf X-\widetilde{\mathbf M}||^2,
\end{equation}
where $\mathbf X$ is the map (matrix) predicted by a diffusion model.
Given a sampling mask $\Psi$, the amount of information about the complete matrix $\widetilde{\mathbf{M}}$ provided by the observation matrix $\mathbf{M}$ can be quantified by conditional mutual information:
\begin{equation}
\mathcal{I}(\Psi) = I(\widetilde{\mathbf{M}}; \mathbf{M} | \Psi) = H(\widetilde{\mathbf{M}}) - H(\widetilde{\mathbf{M}} | \mathbf{M}, \Psi)
\end{equation}
where $H(\cdot)$ denotes differential entropy. This metric reflects the reduction in uncertainty about the complete matrix $\widetilde{\mathbf{M}}$ given the observation $\mathbf{M}$ and the sampling pattern $\Psi$.

The efficacy of the generative model $G_\theta$ depends on its ability to effectively project observational information into the latent space $\mathcal{Z}$. Define the latent space information efficiency as:
\begin{equation}
\eta_z = \frac{I(\widetilde{\mathbf{M}}; G_\theta(\mathbf{Z}; \mathbf{M} \odot \Psi))}{I(\widetilde{\mathbf{M}}; \mathbf{M} | \Psi)} \cdot \frac{\dim(\mathcal{Z})}{\dim(\mathcal{M})}
\end{equation}
where $\dim(\mathcal{Z})$ and $\dim(\mathcal{M})$ denote the dimensions of the latent space and the matrix space, respectively. This efficiency factor comprehensively reflects the compression efficiency and information retention capability of the latent space.

Based on the information-theoretic framework, considering the influence of the observational information amount $\mathcal{I}(\Psi)$ and the latent space efficiency $\eta_z$ on the recovery error, we derive a model for $E$ in the following theorem.

\begin{theorem}
Given the minimum estimation error $E_{min}$ defined in Eq.~\eqref{eqn:EMIN}, the sampling mask $\Psi$, and the latent space efficiency $\eta_z$, the estimation error $E$ satisfies:
\begin{equation}\label{eqn:error}
\begin{aligned}
E ~=~& E_{min} + E_{info} + E_{model} \\
=~& \mathcal R(\mathcal{P}, \widetilde{\mathcal{P}})\cdot \mathbf{W}(\mathcal{P}, \widetilde{\mathcal{P}}) \\&+ \gamma \cdot \exp\left(-\lambda \cdot \frac{\mathcal{I}(\Psi)}{H(\widetilde{\mathbf{M}})} \cdot \eta_z^{\frac{\dim(\mathcal{M})}{\dim(\mathcal{Z})}}\right) + \epsilon
\end{aligned}
\end{equation}
where $\gamma > 0$ is the error sensitivity parameter, reflecting the amplification effect of missing observational information on the error, $\lambda > 0$ is the information decay coefficient, characterizing the marginal diminishing effect of information utilization, and $\epsilon \sim \mathcal{N}(0, \sigma^2)$ is the random disturbance term, encompassing uncertainties in model training and optimization.\label{T:mean}
\end{theorem}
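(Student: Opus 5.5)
The plan is to split the estimation error through the best achievable reconstruction $\mathbf X_{max}$ from Theorem~\ref{T:max}. Write
\begin{equation}
\mathbf X-\widetilde{\mathbf M}=(\mathbf X_{max}-\widetilde{\mathbf M})+(\mathbf X-\mathbf X_{max}),\nonumber
\end{equation}
so that
\begin{equation}
E=E_{min}+\|\mathbf X-\mathbf X_{max}\|^2+2\langle \mathbf X_{max}-\widetilde{\mathbf M},\mathbf X-\mathbf X_{max}\rangle .\nonumber
\end{equation}
The first term is handled directly by Theorem~\ref{T:max}, which gives $\mathcal R(\mathcal P,\widetilde{\mathcal P})\cdot\mathbf W(\mathcal P,\widetilde{\mathcal P})$ plus the fluctuation $\zeta$. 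The task is then to show two things about what remains. First, the excess term $\|\mathbf X-\mathbf X_{max}\|^2$ is governed by how much information the mask $\Psi$ supplies. Second, the cross term, together with $\zeta$ and the training and optimization randomness, can be absorbed into a zero-mean disturbance $\epsilon$.

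For the excess term I would argue information-theoretically. Viewing $\mathbf X_{max}$ as what one would obtain with full information about $\widetilde{\mathbf M}$, the residual uncertainty the model faces is the fraction of entropy not explained by the observations, namely
\begin{equation}
1-\frac{\mathcal I(\Psi)}{H(\widetilde{\mathbf M})}=\frac{H(\widetilde{\mathbf M}\mid\mathbf M,\Psi)}{H(\widetilde{\mathbf M})}.\nonumber
\end{equation}
By the estimation-theoretic form of Fano's inequality (the continuous version $\mathbb E\|\mathbf X-\widetilde{\mathbf M}\|^2\ge \frac{d}{2\pi e}\exp(2H(\widetilde{\mathbf M}\mid \mathbf M,\Psi)/d)$), the reconstruction error is exponential in the conditional entropy. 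Normalizing by $H(\widetilde{\mathbf M})$ turns this into an exponential in $-\mathcal I(\Psi)/H(\widetilde{\mathbf M})$, with $\gamma$ collecting the prefactor and $\lambda$ collecting the $2/d$-type scaling. This yields
\begin{equation}
E_{info}=\gamma\exp\left(-\lambda\,\frac{\mathcal I(\Psi)}{H(\widetilde{\mathbf M})}\right).\nonumber
\end{equation}

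Next I would account for the generator. The model only accesses the observations through $G_\theta(\mathbf Z;\mathbf M\odot\Psi)$, so by the data processing inequality the usable information is $I(\widetilde{\mathbf M};G_\theta(\cdot))$ rather than $\mathcal I(\Psi)$. Rewriting this through the definition of $\eta_z$ and re-expressing the latent-to-ambient dimension ratio gives the factor $\eta_z^{\dim(\mathcal M)/\dim(\mathcal Z)}$ inside the exponent. I would justify the power form as a per-dimension efficiency compounded over the $\dim(\mathcal M)/\dim(\mathcal Z)$ ambient coordinates represented by each latent coordinate.

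Finally, I would collect $\zeta$, the cross term, and the SGD/noise randomness into $\epsilon$. Taking the estimator to be conditionally unbiased around $\mathbf X_{max}$ makes the cross term zero-mean. A central-limit argument over the many independent noise contributions then motivates $\epsilon\sim\mathcal N(0,\sigma^2)$.

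The main obstacle is the middle step: Fano only gives an inequality, with a specific dimensional constant, while the theorem asserts an equality with the particular exponent $\eta_z^{\dim\mathcal M/\dim\mathcal Z}$. My first attempt would be to state the exponential law as the tight (achievable) form under a Gaussian surrogate for $\widetilde{\mathbf M}$, where the Fano bound is attained. I would then define $\gamma$ and $\lambda$ as the effective constants that make it an identity, pushing any slack into $\epsilon$.
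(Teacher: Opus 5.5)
Your plan follows the paper's argument closely. The paper also splits $E$ into $E_{min}+E_{info}+E_{model}$ and gets additivity by assuming the pieces are uncorrelated; your exact expansion through $\mathbf X_{max}$ with a zero-mean cross term makes that same assumption explicit. It produces the exponential from an information-theoretic lower bound on the MSE, namely a rate-distortion form $\sigma^2_{\widetilde{\mathbf M}}\exp(-2\mathcal I(\Psi)/H(\widetilde{\mathbf M}))$ in place of your estimation version of Fano. It declares the effective rate to be $\eta_z^{\dim(\mathcal M)/\dim(\mathcal Z)}\,\mathcal I(\Psi)$ by appeal to concentration of measure. Finally, it absorbs the score-matching error into $\eta_z$ and the residual randomness into $\epsilon$. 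So you are not missing an idea the paper has. However, the steps you rely on do not go through as written, and the paper only postulates them.

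First, the data-processing step does not give the power. By the definition of $\eta_z$, $I(\widetilde{\mathbf M};G_\theta(\mathbf Z;\mathbf M\odot\Psi))=\eta_z\frac{\dim(\mathcal M)}{\dim(\mathcal Z)}\,\mathcal I(\Psi)$. Substituting the usable information therefore multiplies the exponent by the linear factor $\eta_z\dim(\mathcal M)/\dim(\mathcal Z)$. For $0<\eta_z\le 1$ this equals $\eta_z^{\dim(\mathcal M)/\dim(\mathcal Z)}$ only when $\dim(\mathcal Z)=\dim(\mathcal M)$. Your compounding heuristic treats the coordinates as a cascade, whereas information carried by parallel coordinates adds by the chain rule. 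The power law has to be assumed.

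Second, the bound $\mathbb E\|\mathbf X-\widetilde{\mathbf M}\|^2\ge\frac{d}{2\pi e}e^{2H(\widetilde{\mathbf M}\mid\mathbf M,\Psi)/d}$, with $d=\dim(\mathcal M)$, controls the total error, not your excess term. Combined with your expansion and the vanishing cross term, it gives only $\mathbb E\|\mathbf X-\mathbf X_{max}\|^2\ge\frac{d}{2\pi e}e^{2H(\widetilde{\mathbf M}\mid\mathbf M,\Psi)/d}-E_{min}$. So no exponential term is isolated on top of $E_{min}$; the paper makes the same move when it relabels its bound on the total MSE as $E_{info}$.

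Writing the exponent as $-\lambda\,\mathcal I(\Psi)/H(\widetilde{\mathbf M})$ also forces $\lambda=2H(\widetilde{\mathbf M})/d$. This need not be positive, since differential entropy can be negative. For the same reason, $H(\widetilde{\mathbf M}\mid\mathbf M,\Psi)/H(\widetilde{\mathbf M})$ is not a fraction of entropy.

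Conditional unbiasedness around $\mathbf X_{max}$ is an extra assumption. An ideal conditional diffusion model returns posterior samples, which are centred at the $\Psi$-dependent posterior mean, not at a fixed $\mathbf X_{max}$.

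Your fallback for the equality, calibrating $\gamma,\lambda$ and pushing the slack into $\epsilon$, also fails. Two scalar constants cannot absorb a slack that varies with $\Psi$. Because that slack is nonnegative, the leftover cannot be a zero-mean $\mathcal N(0,\sigma^2)$ term independent of $\Psi$, and that independence is exactly what Theorem~\ref{thm:convergence} uses.

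With either your tools or the paper's, the displayed identity can only be obtained as a modeling ansatz that defines $\gamma$, $\lambda$ and $\epsilon$. It cannot be derived as an equality from one-sided bounds.
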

\begin{IEEEproof}
    The theoretical estimation error $E$ can be decomposed into three interrelated components:
\begin{equation}
E = E_{\text{min}} + E_{\text{info}} + E_{\text{model}}\nonumber
\end{equation}
where $E_{\text{min}}$ is defined by Eq.~\eqref{eqn:EMIN} in Theorem \ref{T:max}, $E_{\text{info}}$ characterizes the error introduced by insufficient observational information, and $E_{\text{model}}$ reflects the approximation error caused by imperfect latent space projection.

According to rate-distortion theory~\cite{rateT}, under the condition of a given observational information amount $\mathcal{I}(\Psi)$, the minimum MSE achievable by the optimal estimation of the complete matrix $\widetilde{\mathbf{M}}$ satisfies:
\begin{equation}
\mathbb{E}[|\mathbf{X} - \widetilde{\mathbf{M}}|^2] \geq \sigma^2_{\widetilde{\mathbf{M}}} \cdot \exp\left(-2 \cdot \frac{\mathcal{I}(\Psi)}{H(\widetilde{\mathbf{M}})}\right)\nonumber
\end{equation}
where $\mathbf{X}$ is the estimation result, and $\sigma^2_{\widetilde{\mathbf{M}}}$ is the variance of $\widetilde{\mathbf{M}}$. However, due to the latent space bottleneck, the effective rate is reduced. Define the latent space information efficiency:
\begin{equation}
\eta_z = \frac{I(\widetilde{\mathbf{M}}; G_\theta(\mathbf{Z}; \mathbf{M} \odot \Psi))}{I(\widetilde{\mathbf{M}}; \mathbf{M} | \Psi)} \cdot \frac{\dim(\mathcal{Z})}{\dim(\mathcal{M})}\nonumber
\end{equation}
From the data processing inequality, $0 \leq \eta_z \leq 1$. The achievable effective rate becomes $R_{\text{eff}} = \eta_z^{\frac{d}{d_z}} \cdot \mathcal{I}(\Psi)$, where $d_z = \dim(\mathcal{Z})$, with the exponent arising from the concentration of measure phenomenon in information projection. Applying the rate-distortion bound with effective rate yields:
\begin{equation}
E_{info} = \gamma \cdot \exp\left(-\lambda \cdot \frac{\mathcal{I}(\Psi)}{H(\widetilde{\mathbf{M}})} \cdot \eta_z^{\frac{\dim(\mathcal{M})}{\dim(\mathcal{Z})}}\right)\nonumber
\end{equation}
where normalizing by $H(\widetilde{\mathbf{M}})$ makes the argument dimensionless, and $\gamma, \lambda > 0$ are constants depending on the model architecture.

Third, the model approximation error $E_{model}$ arises from score estimation error and discretization. Under the assumptions of Section~\ref{sec:con}, the score estimation error satisfies:
\begin{equation}
\mathbb{E}_t \mathbb{E}_{\mathbf{x}(t)} [\|\boldsymbol{s}_{\boldsymbol{\theta}}(\mathbf{x}(t), t) - \nabla \log p_t(\mathbf{x}(t))\|^2] \leq \varepsilon_{\text{score}}^2\nonumber
\end{equation}
This error propagates to sample quality as $E_{model} \leq C_{\text{model}} \cdot \varepsilon_{\text{score}}^{2\alpha}$. In practice, we absorb this into $\eta_z$, with remaining stochasticity captured by $\epsilon \sim \mathcal{N}(0, \sigma^2)$.

Finally, by the bias-variance decomposition, the total MSE combines these components additively when errors are uncorrelated, which holds asymptotically for well-behaved estimators. Therefore, we obtain:
\begin{equation}
E = E_{min} + \gamma \cdot \exp\left(-\lambda \cdot \frac{\mathcal{I}(\Psi)}{H(\widetilde{\mathbf{M}})} \cdot \eta_z^{\frac{\dim(\mathcal{M})}{\dim(\mathcal{Z})}}\right) + \epsilon\nonumber
\end{equation}
The exponential form for $E_{info}$ is justified by the rate-distortion function's exponential decay, while the power exponent $\frac{\dim(\mathcal{M})}{\dim(\mathcal{Z})}$ captures the geometric effect of dimensionality reduction on information preservation.

From the perspective of Wasserstein geometry, the Wasserstein distance between the output distribution of the generative model $G_\theta$ under the observational condition $\mathbf{M} \odot \Psi$ and the target distribution $\widetilde{\mathcal{P}}$ satisfies:
\begin{equation}
\mathbf{W}(G_\theta(\cdot; \mathbf{M} \odot \Psi), \widetilde{\mathcal{P}}) \leq \mathbf{W}(\mathcal{P}, \widetilde{\mathcal{P}}) + \mathbf{W}(G_\theta(\cdot; \mathbf{M} \odot \Psi), \mathcal{P})\nonumber
\end{equation}
The first term on the right-hand side corresponds to $E_{\text{min}}$, and the second term is jointly influenced by the observational information amount and model efficiency, exhibiting a consistent trend of variation with $E_{\text{info}} + E_{\text{model}}$.
\end{IEEEproof}

The proof of Theorem \ref{T:max} is based on information projection theory and Wasserstein geometric properties.
Theorem \ref{T:max} not only quantifies the optimality gap of diffusion-based radio map estimation but also provides theoretical guidance for the joint optimization of model design and sampling strategies.

\subsection{Sampling-Dependent Convergence Threshold}

The estimation error $E$ defined in Eq.~\eqref{eqn:error} can be decomposed into three components:
\begin{equation}\label{eqn:EFull}
E = E_{min} + E_{info} + E_{model}
\end{equation}
where $E_{min}$ is the minimum error defined in Eq.~\eqref{eqn:EMIN}, $E_{info} = f(\mathcal{I}(\Psi))$ is the information deficiency error, arising from information loss due to sparse sampling, and $E_{model} = g(\eta_z)$ is the model approximation error, reflecting the imperfection of the latent space projection.

Considering that the functions $f(\cdot)$ and $g(\cdot)$ have the following asymptotic properties:
\begin{equation}
\lim_{\mathcal{I}(\Psi) \to H(\widetilde{\mathbf{M}})} f(\mathcal{I}(\Psi)) = 0, \quad \lim_{\eta_z \to 1} g(\eta_z) = 0
\end{equation}
and based on the error model in Theorem \ref{T:mean}, we can derive conditions on radio sampling for diffusion models to achieve a given accuracy of radio map estimation.

Given an estimation error tolerance $\delta > 0$, \textcolor{black}{under random sampling conditions}, let $\rho_c$ be a sampling rate (density) such that when sampling rates $\rho > \rho_c$, the estimation error $E$ satisfies the following with a probability of at least $1 - \xi$:
\begin{equation}
\frac{E - E_{min}}{E_{min}} \leq \delta
\end{equation}
Then we have the following theorem for $\rho_c$.

\begin{theorem}\label{thm:convergence}
The sampling rate $\rho_c$ satisfies:
\begin{equation}
\begin{aligned}
    \rho_c = \min&\left\{\rho \in [0,1] : \gamma \cdot \exp\left(-\lambda \cdot \frac{\mathcal{I}(\Psi_\rho)}{H(\widetilde{\mathbf{M}})} \cdot \eta_z^{\frac{\dim(\mathcal{M})}{\dim(\mathcal{Z})}}\right) \leq \right.\\&\left. \delta \cdot E_{min} - \Phi^{-1}(1-\xi) \cdot \sigma_\epsilon\right\}
\end{aligned}
\end{equation}
where $\Psi_\rho$ denotes the sampling mask corresponding to $\rho$, and $\Phi^{-1}$ is the inverse cumulative distribution function of the standard normal distribution.
\end{theorem}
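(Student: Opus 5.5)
The plan is to start from the error model of Theorem~\ref{T:mean} and turn the requirement $(E-E_{min})/E_{min}\le\delta$ into a deterministic inequality that involves only the information term. By Eq.~\eqref{eqn:error},
\[
E-E_{min}=E_{info}(\rho)+\epsilon, \quad\text{with}\quad E_{info}(\rho)=\gamma \exp\left(-\lambda \frac{\mathcal{I}(\Psi_\rho)}{H(\widetilde{\mathbf{M}})} \eta_z^{\frac{\dim(\mathcal{M})}{\dim(\mathcal{Z})}}\right).
\]
Here $\epsilon\sim\mathcal{N}(0,\sigma_\epsilon^2)$, and $E_{min}>0$ is treated as fixed given $\mathcal{P},\widetilde{\mathcal{P}}$. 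I condition on the sampling pattern, so that under random sampling $\mathcal{I}(\Psi_\rho)$ is replaced by its value at rate $\rho$. The event $\{E-E_{min}\le \delta E_{min}\}$ is then the same as $\{\epsilon\le \delta E_{min}-E_{info}(\rho)\}$. Its probability is $\Phi\big((\delta E_{min}-E_{info}(\rho))/\sigma_\epsilon\big)$.

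Requiring this probability to be at least $1-\xi$ gives, since $\Phi$ is strictly increasing,
\[
(\delta E_{min}-E_{info}(\rho))/\sigma_\epsilon\ge\Phi^{-1}(1-\xi),
\]
which rearranges to
\[
E_{info}(\rho)\le \delta E_{min}-\Phi^{-1}(1-\xi)\,\sigma_\epsilon.
\]
This is exactly the inequality inside the set in the statement.

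Next I argue monotonicity, so that the set of admissible $\rho$ is an up-set and its minimum is the threshold. As $\rho$ grows, $\Psi_\rho$ reveals a superset of entries. By the chain rule and nonnegativity of conditional mutual information, $\mathcal{I}(\Psi_\rho)$ is nondecreasing in $\rho$. Taking $\eta_z$ as fixed for the given architecture (or nondecreasing, since the latent space only gains information), the exponent is nonincreasing. Because $\gamma,\lambda>0$, $E_{info}(\rho)$ is then nonincreasing. So if the condition holds at some $\rho_0$, it holds for all $\rho>\rho_0$, and $\rho_c$ is the infimum of the feasible set.

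I then show that the infimum is attained, so that writing $\min$ is justified. The route is to assume $\mathcal{I}(\Psi_\rho)$ is right-continuous in $\rho$, for example by working with the expected information under random sampling, which is continuous in $\rho$. Then the feasible set is closed on the left. Nonemptiness follows from the asymptotic property $f(\mathcal{I}(\Psi))\to0$ as $\mathcal{I}\to H(\widetilde{\mathbf{M}})$, provided the right-hand side $\delta E_{min}-\Phi^{-1}(1-\xi)\sigma_\epsilon$ is positive. If that right-hand side is nonpositive, the set is empty and the threshold is vacuous; I would note this as a side condition.

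I expect the main obstacle to be the handling of randomness. First, $\mathcal{I}(\Psi_\rho)$ is random under random sampling, so I must either fix a typical mask or average. Second, the error model is only an equality in distribution, so the step "probability $=\Phi(\cdot)$" requires that $\epsilon$ is independent of $\Psi_\rho$. I would state both of these explicitly as modelling assumptions inherited from Theorem~\ref{T:mean}.
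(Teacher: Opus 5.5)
Your proposal is correct and follows essentially the same route as the paper. Both arguments treat $\mathcal{I}(\Psi_\rho)$ as a deterministic function of $\rho$ and use independence of the Gaussian disturbance to write the success probability as $\Phi\big((\delta E_{\min}-E_{info}(\rho))/\sigma_\epsilon\big)$. Both then invert $\Phi$ and use monotonicity of $E_{info}$ in $\rho$ to get a feasible set of the form $[\rho_c,1]$. Your remarks on attainment of the minimum and on a nonpositive right-hand side go beyond the paper: in that case the feasible set is empty, not all of $[0,1]$ as the paper claims, since $\gamma\exp(\cdot)>0$.
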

\begin{IEEEproof}
    Recall that Theorem \ref{T:mean} provides the following decomposition of the estimation error for a given sampling mask $\Psi$:
\begin{equation}
E = E_{\min} + \gamma \cdot \exp\!\left(-\lambda \cdot \frac{\mathcal{I}(\Psi)}{H(\widetilde{\mathbf{M}})} \cdot \eta_z^{\frac{\dim(\mathcal{M})}{\dim(\mathcal{Z})}}\right) + \varepsilon,\nonumber
\end{equation}
where $\varepsilon \sim \mathcal{N}(0,\sigma_\varepsilon^2)$ is a Gaussian disturbance independent of $\Psi$, and all other quantities are as defined in the main text. For a fixed sampling rate $\rho$, we consider a random sampling mask $\Psi_\rho$ drawn from the distribution corresponding to $\rho$. In the regime of interest, the mutual information $\mathcal{I}(\Psi_\rho)$ depends on $\rho$ in a deterministic manner (e.g., through its expectation under the random sampling scheme). Hence, we denote
\begin{equation}
A(\rho) \triangleq \frac{\mathcal{I}(\Psi_\rho)}{H(\widetilde{\mathbf{M}})} \cdot \eta_z^{\frac{\dim(\mathcal{M})}{\dim(\mathcal{Z})}},\nonumber
\end{equation}
which is a non‑decreasing function of $\rho$ because a higher sampling rate provides more information about the true radio map. The error term due to information deficiency then becomes $\Delta(\rho) = \gamma \exp\!\bigl(-\lambda A(\rho)\bigr)$, which is non‑increasing in $\rho$.

We aim to guarantee that, with probability at least $1-\xi$, the relative error satisfies
\begin{equation}
\frac{E - E_{\min}}{E_{\min}} \le \delta.\label{eq:Apx3:3}
\end{equation}
Substituting the expression for $E$, condition Eq. \eqref{eq:Apx3:3} is equivalent to
\begin{equation}
\Delta(\rho) + \varepsilon \le \delta E_{\min}.\label{eq:Apx3:4}
\end{equation}

Since $\varepsilon$ is Gaussian and independent of $\Psi_\rho$, for a fixed $\rho$ the probability of Eq. \eqref{eq:Apx3:4} is
\begin{equation}
\mathbb{P}\bigl(\Delta(\rho) + \varepsilon \le \delta E_{\min}\bigr) = \Phi\!\left(\frac{\delta E_{\min} - \Delta(\rho)}{\sigma_\varepsilon}\right),\label{eq:Apx3:5}
\end{equation}
where $\Phi$ denotes the cumulative distribution function of the standard normal distribution. To ensure that this probability is at least $1-\xi$, we require
\begin{equation}
\Phi\!\left(\frac{\delta E_{\min} - \Delta(\rho)}{\sigma_\varepsilon}\right) \ge 1-\xi.\label{eq:Apx3:6}
\end{equation}
Because $\Phi$ is strictly increasing, Eq. \eqref{eq:Apx3:6} is equivalent to
\begin{equation}
\frac{\delta E_{\min} - \Delta(\rho)}{\sigma_\varepsilon} \ge \Phi^{-1}(1-\xi),\label{eq:Apx3:7}
\end{equation}
where $\Phi^{-1}$ is the quantile function of the standard normal distribution. Rearranging gives
\begin{equation}
\Delta(\rho) \le \delta E_{\min} - \sigma_\varepsilon \Phi^{-1}(1-\xi).\label{eq:Apx3:8}
\end{equation}
Recalling the definition of $\Delta(\rho)$, inequality Eq. \eqref{eq:Apx3:8} becomes
\begin{equation}
\gamma \exp\!\left(-\lambda A(\rho)\right) \le \delta E_{\min} - \sigma_\varepsilon \Phi^{-1}(1-\xi).\label{eq:Apx3:9}
\end{equation}

The right‑hand side of Eq. \eqref{eq:Apx3:9} is assumed to be positive for the threshold to be meaningful; otherwise any $\rho$ would satisfy the condition. Because $A(\rho)$ increases with $\rho$, the left‑hand side decreases with $\rho$. Therefore, the set of $\rho$ for which Eq. \eqref{eq:Apx3:9} holds is an interval $[\rho_c, 1]$, where $\rho_c$ is the smallest $\rho$ satisfying Eq. \eqref{eq:Apx3:9}. Formally,
\begin{equation}
\begin{aligned}
    \rho_c = \min&\left\{\rho \in [0,1] : \gamma \cdot \exp\left(-\lambda \cdot A(\rho)\right) \leq \right.\\&\left. \delta \cdot E_{min} - \Phi^{-1}(1-\xi) \cdot \sigma_\epsilon\right\}\nonumber
\end{aligned}
\end{equation}
Substituting back the definition of $A(\rho)$ yields exactly the expression in Theorem \ref{thm:convergence}. 
\end{IEEEproof}

{Theorem \ref{thm:convergence} extends Eq. \eqref{eq:rank} to the setting of diffusion-based radio map estimation}, and offers theoretical guidance for sampling strategy design: 
When the sampling rate falls below $\rho_c$, the quality of the estimated radio maps improves substantially with increasing sampling rates;
once the sampling rate exceeds $\rho_c$, further increases yield only marginal gains.

\section{Approximations to Theoretical Radio Map Estimation Errors of Diffusion Models}\label{sec:Empirical-formula}

In the previous section, for the problem of radio map estimation, we derived a theoretical minimum estimation error that can be achieved by diffusion models (Theorem~\ref{T:max}), and further extended this error to incorporate the sampling rate (Theorem~\ref{T:mean}).
However, direct application of these error bounds in practical scenarios is hindered by their reliance on information that is often unavailable or difficult to obtain. 
To address this limitation, in this section, we introduce empirical formulas which approximate the derived errors and are readily computable in practice.

\subsection{Approximation to Minimum Estimation Error}\label{subsec:EminStar}

For the minimum estimation error $E_{\min}$ in Eq.~\eqref{eqn:EMIN}, $\mathrm{diam}(\mathcal{A})$ (diameter of the manifold $\mathcal{A}$), $\mathrm{diam}(\widetilde{\mathcal{A}})$ (diameter of the manifold $\widetilde{\mathcal{A}}$), and $\mathbf{W}(\mathcal{P}, \widetilde{\mathcal{P}})$ (Wasserstein distance between distributions $\mathcal{P}$ and $\widetilde{\mathcal{P}}$) are difficult to calculate in practice.
However, since datasets are empirical samples drawn from the underlying radio map distributions, they can serve as practical proxies for this distributional information.
Leveraging principles from probability theory and statistics, we therefore approximate $\mathrm{diam}(\mathcal{A})$, $\mathrm{diam}(\widetilde{\mathcal{A}})$, and $\mathbf{W}(\mathcal{P}, \widetilde{\mathcal{P}})$ using observable datasets and their empirical similarity. 
The proposed approximations are formalized in the following corollary.

\begin{corollary}\label{cor:1} 
The following \( E_{\text{min}}^* \) approximates the minimum estimation error \( E_{\text{min}} \) in Eq.~\eqref{eqn:EMIN}:
\begin{equation}\label{eqn:eminstar}
\begin{aligned}
    E_{\text{min}}^* ~=~& C' \cdot \left( \frac{\text{diam}(Y) + \text{diam}(\widetilde{Y})}{2} \right)\\& \cdot \left(1 + \log\left(1 + D(Y, \widetilde{Y})\right)\right) \cdot D(Y, \widetilde{Y}) + \widetilde{\zeta}
\end{aligned}
\end{equation}
where \( D(Y, \widetilde{Y}) \) is a distribution difference metric computed based on the datasets \( Y \) and \( \widetilde{Y} \), used to approximate the theoretical Wasserstein distance and can be measured by methods such as Fréchet Inception Distance (FID)~\cite{FID} and Maximum Mean Discrepancy (MMD)~\cite{MMD}). Here, \( Y \) and \( \widetilde{Y} \) are finite datasets from the target distribution and the training distribution, respectively; \( \mathrm{diam}(\cdot) \) denotes the empirical diameter of a dataset (e.g., the maximum distance between samples); \( C' \) is a fitting constant related to the model and task. \( \widetilde{\zeta} \) is an error term reflecting the empirical approximation error of the diffusion model \( G_\theta \) on its training distribution \( \widetilde{Y} \), which can be estimated through the model's performance on a held-out test set from the training distribution.\label{C:EFTME}
\end{corollary}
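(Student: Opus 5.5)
Corollary~\ref{cor:1} follows from Theorem~\ref{T:max} by replacing each population quantity with a consistent empirical estimator and absorbing the resulting errors into the fitting constant $C'$ and the residual $\widetilde{\zeta}$. The argument is a plug-in (continuity) argument. Write $E_{\min}=F(a,\tilde a,w)+\zeta$ with $F(a,\tilde a,w)=C\frac{a+\tilde a}{2}(1+\log(1+w))\,w$. Here $a=\mathrm{diam}(\mathcal A)$, $\tilde a=\mathrm{diam}(\widetilde{\mathcal A})$ and $w=\mathbf W(\mathcal P,\widetilde{\mathcal P})$. On any bounded region the map $F$ is smooth and locally Lipschitz in each argument, since assumption \textbf{A1} gives compact supports. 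So it suffices to show that the empirical surrogates converge to, or are proportional to, the true quantities.

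\textbf{Step 1 (diameters).} Let $Y$ and $\widetilde Y$ be i.i.d.\ samples of sizes $N$ and $\widetilde N$. Then $\mathrm{diam}(Y)\le \mathrm{diam}(\mathcal A)$, and $\mathrm{diam}(Y)\to\mathrm{diam}(\mathcal A)$ almost surely whenever $\mathcal A$ is the support of $\mathcal P$: every $\epsilon$-neighbourhood of a diameter-achieving pair has positive mass, so with probability one both neighbourhoods are eventually hit. The same holds for $\widetilde Y$. This gives $\mathrm{diam}(Y)=\mathrm{diam}(\mathcal A)-o(1)$.

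\textbf{Step 2 (distribution discrepancy).} By the triangle inequality for $\mathbf W$,
\[
|\mathbf W(\hat{\mathcal P}_Y,\hat{\mathcal P}_{\widetilde Y})-w|\le \mathbf W(\hat{\mathcal P}_Y,\mathcal P)+\mathbf W(\hat{\mathcal P}_{\widetilde Y},\widetilde{\mathcal P}),
\]
and on a compact set both terms on the right vanish at the standard empirical rates. For a surrogate $D$ such as MMD or FID, I would invoke its equivalence with $\mathbf W$ on compact supports. For MMD with a characteristic, Lipschitz kernel there are constants with $c_1 D\le \mathbf W\le c_2 D^{\alpha}$. For FID, the Gaussian-moment form gives the 2-Wasserstein distance between moment-matched Gaussians in feature space. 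I would then postulate a locally linear relation $w\approx \kappa D(Y,\widetilde Y)$ in the regime of interest.

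\textbf{Step 3 (assembly).} Substitute the estimates from Steps 1--2 into $F$. The proportionality factor $\kappa$, together with the factor $\kappa$ that appears inside the logarithm to first order, is absorbed into $C'=C\kappa$. The first-order Taylor remainder of $F$, the finite-sample fluctuations, and $\zeta$ are collected into $\widetilde\zeta$. To justify estimating $\widetilde\zeta$ on a held-out set, recall from the proof of Theorem~\ref{T:max} that $|\zeta|\le\mathcal R\,\epsilon_{\rm opt}$. Here $\epsilon_{\rm opt}$ is the fit of $G_\theta$ to its training law, which is exactly what in-distribution held-out error measures.

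\textbf{Main obstacle.} The hard step is Step 2: making $D\approx w/\kappa$ rigorous. MMD and FID are not metrically equivalent to $\mathbf W$ in general; the inequalities hold only with exponents and distribution-dependent constants. I would first try the MMD case with a bounded Lipschitz kernel on the compact support to get two-sided bounds. Any residual nonlinearity would then be pushed into $\widetilde\zeta$, with the remark that $E^*_{\min}$ is an approximation rather than an identity, and that $C'$ is fitted empirically precisely to absorb this mismatch.
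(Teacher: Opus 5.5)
Your route is the same as the paper's. The paper gives no derivation for Corollary~\ref{cor:1}: it substitutes into Theorem~\ref{T:max} the empirical diameters of $Y,\widetilde Y$ for $\mathrm{diam}(\mathcal A),\mathrm{diam}(\widetilde{\mathcal A})$, an FID/MMD value for $\mathbf W(\mathcal P,\widetilde{\mathcal P})$, a fitted $C'$ for $C$, and a held-out estimate $\widetilde\zeta$ for $\zeta$, and then checks the fit numerically (Fig.~\ref{fig:demo}). Your Step~1 (almost-sure consistency of the empirical diameter under \textbf{A1}) and the empirical triangle inequality in Step~2 are sound additions. Your postulate $w\approx\kappa D$ is exactly where both you and the paper stop being rigorous. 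However, two parts of your extra scaffolding are wrong as stated and should not be presented as established.

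First, the reverse bound $\mathbf W\le c_2D^{\alpha}$ does not hold for every characteristic Lipschitz kernel; only the direction $c_1D\le\mathbf W$ is fine. For the Gaussian kernel, perturbing a density by a smoothly windowed oscillation at frequency $k$ changes it by $\mathbf W\asymp 1/k$, but by an MMD that decays faster than any power of $1/k$. So no power-law reverse bound exists; H\"older-type reverse bounds are available only for finite-smoothness (Sobolev/Mat\'ern-type) kernels. FID is worse in three ways. It is the \emph{squared} $W_2$ distance between moment-matched Gaussians in Inception-feature space, not $W_2$ itself. It vanishes for distinct laws with equal first two feature moments, so no bound $\mathbf W\le c_2\,\mathrm{FID}^{\alpha}$ can hold. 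And it scales like a squared distance (a pure translation by $s$ in feature space gives $\mathrm{FID}=s^2$), so $w\approx\kappa D$ is not even locally linear for $D=\mathrm{FID}$.

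Second, in Step~3 the $\kappa$ inside the logarithm cannot be absorbed into a constant; it only becomes negligible as $D\to0$. More seriously, you move that Taylor remainder and the $D$-versus-$\mathbf W$ mismatch into $\widetilde\zeta$. The residual then depends on the shift between $Y$ and $\widetilde Y$, so it can no longer be estimated from in-distribution held-out error, which is exactly what the statement claims for $\widetilde\zeta$. The bound $|\zeta|\le\mathcal R\,\epsilon_{\text{opt}}$ you invoke controls only the original $\zeta$, and even that carries the shift-dependent factor $\mathcal R$. To stay consistent with the statement, keep $\widetilde\zeta$ as the proxy for $\zeta$ alone. Let the empirically fitted $C'$ carry the mismatch, together with an explicit admission that $E^*_{\min}$ is only an approximation; that is effectively what the paper's calibration against simulated MSEs does.
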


The empirical formula in Eq.~\eqref{eqn:eminstar} provides a practical approximation of the minimum estimation error in Eq.~\eqref{eqn:EMIN}.
Specifically, it uses the maximum distance between samples in the dataset $Y$ to approximate the diameter of the manifold $\mathcal{A}$, uses the maximum distance between samples in the dataset $\widetilde{Y}$ to approximate the diameter of the manifold $\widetilde{\mathcal{A}}$, and uses FID or MMD to approximate the Wasserstein distance between $\mathcal{P}$ and $\widetilde{\mathcal{P}}$.
Unlike its theoretical counterpart, Eq. \eqref{eqn:eminstar} is readily computable in practice and thus well-suited for real-world applications.
Corollary~\ref{cor:1} suggests that our proposed approximation to the minimum estimation error is related to the distributional difference between the dataset used for model training and the dataset used for model inference.
\textcolor{black}{Similar to $E_{\text{min}}$ defined in Eq.~\eqref{eqn:EMIN}, $E_{\text{min}}^*$ is a function of the random variable $\widetilde{\zeta}$ that depends on the gap between the diffusion model with random noise as its input and the training dataset.}

\subsection{Approximation to Sampling-Dependent Estimation Error}

For the estimation error $E$ defined in Eq.~\eqref{eqn:error}, information-related metrics, e.g., the mutual information $\mathcal{I}(\Psi)$ and the differential entropy $H(\widetilde{\mathbf{M}})$, are difficult to calculate in practice.
In the following corollary, we provide an empirical formula to approximate $E$.

\begin{corollary}\label{cor:2}
Considering that the layout of obstacles and \textcolor{black}{the degree of radio power variation near sampling points} significantly impact the estimation of radio maps, we propose to use the following $E^*$ to approximate the sampling-dependent estimation error $E$ in Eq.~\eqref{eqn:error}:
\begin{equation}\label{eqn:estar}
E^* = E^*_{min} + E^*_{model} - \alpha\ln(\rho^*+1) + \epsilon
\end{equation}
with
\begin{equation}
\rho^* = \frac{c_1 \cdot (\kappa + c_2) \cdot (\rho + c_3)}{\exp\left[{\frac{1}{mn} \sum_{i=1}^{m} \sum_{j=1}^{n} \mathbf{B}_{ij}}\right]}
\end{equation}
where \textcolor{black}{\( \rho \) represents the actual sampling rate}, \( \mathbf{B} \) is the obstacle layout mask, with 1 indicating the locations of the obstacles and 0 for others, \( c_1, c_2, c_3 \) are normalization hyperparameters, and \( \kappa \) represents the degree of power variation at sampling points and can be computed as:
\begin{equation}
\kappa = \frac{\Psi \odot \sqrt{ (\nabla_x \mathbf{X})^2 + (\nabla_y \mathbf{X})^2 }}{\sum_{i=1}^{m}\sum_{j=1}^{n} \Psi_{ij}}
\end{equation} \label{C:EFIDE}
Let \( \mathbf{X}_{\text{true}} \) be the true matrix of a radio map, and \( \Psi_{\text{train}} \) and \( \Psi_{\text{test}} \) be the training and test sampling masks, respectively.
Then we have
\begin{equation}\label{eqn:emodelstar}
\begin{aligned}
E^*_{model}
~&~= \mathbf{W}(G_{\theta^*}(\cdot; \Psi_{\text{train}}), \mathcal{P}_{\text{train}}) \\
&\quad + \mathbf{W}(\mathcal{P}_{\text{train}}, \mathcal{P}_{\text{test}}), \\
&\quad + \mathbf{W}(G_{\theta^*}(\cdot; \Psi_{\text{test}}), G_{\theta^*}(\cdot; \Psi_{\text{train}}))
\end{aligned}
\end{equation}
where \( \mathcal{P}_{\text{train}} = \mathcal{P}(\cdot|\mathbf{X}_{\text{true}} \odot \Psi_{\text{train}}) \) and  \( \mathcal{P}_{\text{test}} = \mathcal{P}(\cdot|\mathbf{X}_{\text{true}} \odot \Psi_{\text{test}}) \). For brevity, denote \( G_\theta(\cdot; \Psi) \equiv G_\theta(\mathbf{Z}; \mathbf{X}_{\text{true}} \odot \Psi) \).
In addition, let \( \theta^* \) denote the model parameters optimized on the training dataset, aiming to minimize the training distribution matching error and the observation reconstruction error:
\begin{equation}
\begin{aligned}
    \theta^* =& \arg\min_{\theta} \mathbf{W}(G_\theta(\cdot; \Psi_{\text{train}}), \mathcal{P}_{\text{train}}) \\&+ \lambda \|(G_\theta(\cdot; \Psi_{\text{train}}) - \mathbf{X}_{\text{true}}) \odot \Psi_{\text{train}}\|_F^2.
\end{aligned}
\end{equation}
\end{corollary}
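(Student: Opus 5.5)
We obtain Corollary~\ref{cor:2} by starting from the decomposition $E=E_{min}+E_{info}+E_{model}$ of Theorem~\ref{T:mean} and replacing each term with an observable surrogate. The justification is by heuristic correspondence rather than exact derivation.

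\emph{First term.} $E_{min}$ is replaced by $E^*_{min}$ from Corollary~\ref{cor:1}, which we take as given.

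\emph{Third term.} For $E^*_{model}$, we apply the triangle inequality for $\mathbf{W}$, in the same way as in the proofs of Theorems~\ref{T:max} and~\ref{T:mean}. The discrepancy between the test-time output $G_{\theta^*}(\cdot;\Psi_{\text{test}})$ and the test conditional law $\mathcal{P}_{\text{test}}$ is routed through $G_{\theta^*}(\cdot;\Psi_{\text{train}})$ and $\mathcal{P}_{\text{train}}$:
\begin{equation}
\begin{aligned}
\mathbf{W}(G_{\theta^*}(\cdot;\Psi_{\text{test}}),\mathcal{P}_{\text{test}}) \le{}& \mathbf{W}(G_{\theta^*}(\cdot;\Psi_{\text{test}}),G_{\theta^*}(\cdot;\Psi_{\text{train}}))\\
&+\mathbf{W}(G_{\theta^*}(\cdot;\Psi_{\text{train}}),\mathcal{P}_{\text{train}})+\mathbf{W}(\mathcal{P}_{\text{train}},\mathcal{P}_{\text{test}}).
\end{aligned}\nonumber
\end{equation}
The three summands are exactly Eq.~\eqref{eqn:emodelstar}. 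The first one is controlled at $\theta^*$ by the convergence bound of Section~\ref{sec:con}, because $\theta^*$ minimizes the training matching error plus the observation-consistency penalty. This penalty also enforces the constraint $\mathbf{X}\odot\Psi=\mathbf{M}\odot\Psi$ of Eq.~\eqref{equ:GAIMCytarget}. We then identify $E^*_{model}$ with this upper bound, which plays the role of $g(\eta_z)$.

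\emph{Second term.} For $E_{info}=\gamma\exp(-\lambda A(\rho))$, the plan is to argue that $A(\rho)=\mathcal{I}(\Psi_\rho)\eta_z^{d/d_z}/H(\widetilde{\mathbf M})$ is an increasing function of an \emph{effective} sampling rate $\rho^*$, in three steps:
\begin{itemize}
\item[(i)] Under random sampling, $\mathcal{I}(\Psi_\rho)$ grows affinely in $\rho$ to first order, giving the factor $\rho+c_3$.
\item[(ii)] Each sample carries information proportional to the local signal variability. This is proxied by the averaged gradient magnitude $\kappa$ at the sampled points, giving the factor $\kappa+c_2$.
\item[(iii)] Obstacles raise $H(\widetilde{\mathbf M})$ roughly linearly in the obstacle density $\frac{1}{mn}\sum\mathbf{B}_{ij}$. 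Since the exponent of $E_{info}$ is divided by $H(\widetilde{\mathbf M})$, this accounts for the exponential denominator in $\rho^*$.
\end{itemize}

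Finally, we replace $\gamma e^{-\lambda A}$ by $-\alpha\ln(\rho^*+1)$ plus a constant absorbed into the other terms. The justification is that both expressions are decreasing and concave-flattening in the information content. The logarithm is the first-order behaviour of rate–distortion in the moderate regime, and it vanishes at $\rho^*=0$. The disturbance $\epsilon$ is carried over unchanged.

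The main obstacle is this last identification. Neither the affine dependence of $\mathcal{I}$ on $(\kappa,\rho)$ nor the exponential-to-logarithm swap is a rigorous consequence of Theorem~\ref{T:mean}. I would state both as modelling approximations, with $c_1,c_2,c_3,\alpha$ as fitted constants, and defer their validation to the simulations in Section~\ref{sec:evaluation}.
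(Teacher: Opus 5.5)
Your proposal matches the paper's justification. The paper likewise builds $E^*$ by substituting observable surrogates term by term into the decomposition $E=E_{min}+E_{info}+E_{model}$ of Theorem~\ref{T:mean}: $E^*_{min}$ from Corollary~\ref{cor:1}, an obstacle- and power-variation-modulated logarithmic term for $E_{info}$, and training/test discrepancies for $E_{model}$, treated as a modelling approximation checked by simulation rather than derived rigorously.

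Your triangle-inequality motivation for the three summands of $E^*_{model}$ and your explicit heuristics (i)--(iii) for $\rho^*$ go beyond what the paper writes, but they are consistent with it and you correctly flag them as heuristic.
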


The empirical formula in Eq.~\eqref{eqn:estar} provides a practical approximation of the sampling-dependent estimation error in Eq.~\eqref{eqn:error}.
Specifically, as shown in Eq.~\eqref{eqn:EFull}, the sampling-dependent estimation error consists of the minimum estimation error (defined in Eq.~\eqref{eqn:EMIN}), the information deficiency error (arising from information loss due to sparse sampling), and the model approximation error (reflecting the imperfection of the latent space projection).
Eq.~\eqref{eqn:estar} uses Eq.~\eqref{eqn:eminstar} in Corollary~\ref{cor:1} to approximate the minimum estimation error, leverages the layout of obstacles and the degree of radio power variation near sampling points, both of which significantly impact the estimation of radio maps, to approximate the information deficiency error, and employs the differences between training dataset and test dataset to approximate the model approximation error.

\subsection{Training Sampling Rate vs Test Sampling Rate}

As shown in Eq.~\eqref{eqn:error} and Eq.~\eqref{eqn:estar}, the estimation error comprises a model approximation component that depends primarily on model characteristics and the distributional discrepancy between the training and test datasets. 
In the context of diffusion-based radio map estimation, the sampling rate employed during training may differ from that used during testing.
Intuitively, the training sampling rate should be no smaller than the test sampling rate, as the model is optimized on the training sampling rate and may not generalize well to higher-rate inputs unseen during training.
However, \emph{to improve the test-time performance of a diffusion model for radio map estimation, should the test-time sampling rate be equal to, or less than the training-time sampling rate?}
In this section, we address this question by analyzing the model approximation error derived in Eq.~\eqref{eqn:emodelstar}.

Let the sampling rate during training be \( \rho_{\text{train}} \) and that during test be \( \rho_{\text{test}} \) with \( \rho_{\text{train}}\geq\rho_{\text{test}} \), then we have the following corollary.

\begin{corollary}\label{cor:3}
\textcolor{black}{Given a test sampling rate $\rho_{\text{test}}$, there exists a training sampling rate \( \rho_{\text{train}}^* > \rho_{\text{test}} \) that minimizes Eq.~\eqref{eqn:emodelstar} which approximates the model approximation error.\label{C:EFMAE}}
\end{corollary}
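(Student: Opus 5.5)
We treat Eq.~\eqref{eqn:emodelstar} as a function of a single variable $\rho_{\text{train}}\in[\rho_{\text{test}},1]$, with $\rho_{\text{test}}$ fixed, and write it as
\[
F(\rho_{\text{train}}) = T_1(\rho_{\text{train}}) + T_2(\rho_{\text{train}}) + T_3(\rho_{\text{train}}),
\]
where $T_1=\mathbf{W}(G_{\theta^*}(\cdot;\Psi_{\text{train}}),\mathcal{P}_{\text{train}})$ is the fitting error on the training conditional, $T_2=\mathbf{W}(\mathcal{P}_{\text{train}},\mathcal{P}_{\text{test}})$ is the gap between the conditional laws, and $T_3=\mathbf{W}(G_{\theta^*}(\cdot;\Psi_{\text{test}}),G_{\theta^*}(\cdot;\Psi_{\text{train}}))$ is the change in model output when the observation pattern changes. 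The argument has three parts: show that $F$ attains a minimum on the compact interval; show that this minimum does not occur at $\rho_{\text{train}}=\rho_{\text{test}}$; and conclude that the minimizer lies strictly inside $(\rho_{\text{test}},1]$.

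\emph{Existence of a minimizer.} We take the masks to be nested, $\Psi_{\text{test}}\preceq\Psi_{\text{train}}$, as under random sampling with a coupled mask sequence. We then argue that each $T_i$ varies continuously in $\rho_{\text{train}}$. For $T_2$ this follows from continuity of the conditional law in the observed entries together with compact support (assumption \textbf{A1}). For $T_1$ and $T_3$ it follows from continuity of $G_{\theta^*}$ in its conditioning input, since the score network is continuous by \textbf{A3}. A continuous function on a compact interval attains its minimum, so a minimizer $\rho_{\text{train}}^*$ exists.

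\emph{Monotonicity of the terms.} We fix the monotone behaviour of each term, working on the discrete grid of admissible rates $\rho=k/(mn)$ if continuity is delicate. $T_1$ is non-increasing in $\rho_{\text{train}}$: by the argument of Theorem~\ref{T:mean}, more observations raise $\mathcal{I}(\Psi)$ and so lower the information-deficiency contribution $\gamma\exp(-\lambda A(\rho))$ to the fitting error. $T_2$ and $T_3$ vanish at $\rho_{\text{train}}=\rho_{\text{test}}$ and are non-decreasing as the two masks move apart. For $T_2$ we would bound it by the extra observed entries times $\mathrm{diam}(\mathcal{A})$. For $T_3$ we would bound it by the Lipschitz constant of $G_{\theta^*}$ times $\|\mathbf{X}_{\text{true}}\odot(\Psi_{\text{train}}-\Psi_{\text{test}})\|_F$.

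\emph{The interior-minimizer step.} It remains to show $F(\rho_{\text{test}}+h)<F(\rho_{\text{test}})$ for some small $h>0$. We would compare one-sided rates at $\rho_{\text{test}}$. The gain in $T_1$ is at least $\gamma\lambda A'(\rho_{\text{test}})e^{-\lambda A(\rho_{\text{test}})}h$, which is of first order in $h$. The losses in $T_2$ and $T_3$ must then be shown to be of smaller order. Our plan is to argue that with nested random masks both distances grow like $O(h^{1+\alpha})$, or like $O(\sqrt{h})$ times a factor that vanishes because conditioning on a few extra entries of an already-observed smooth field changes little. This comparison is the main obstacle. The paper's assumptions only supply upper bounds of matching first order, so we expect to need an additional regularity hypothesis: that $\mathcal{P}(\cdot\,|\,\cdot)$ and $G_{\theta^*}$ are differentiable in the mask density with zero one-sided derivative at coincidence. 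We would state this hypothesis explicitly, since without it the minimizer could sit exactly at $\rho_{\text{test}}$.

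Under that hypothesis, $F(\rho_{\text{test}}+h)<F(\rho_{\text{test}})$ for small $h>0$, so the minimizer $\rho_{\text{train}}^*$ is not $\rho_{\text{test}}$. Hence $\rho_{\text{train}}^*>\rho_{\text{test}}$, which is the claim of the corollary.
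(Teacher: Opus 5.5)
Your decomposition into $T_1,T_2,T_3$ and the monotonicity you assign to each match the paper's analysis of (A), (B), (C). The paper then bounds $T_2\le L|\rho_{\text{train}}-\rho_{\text{test}}|$ and $T_3\le K|\rho_{\text{train}}-\rho_{\text{test}}|$. It argues that the \emph{upper bound} $f(\rho_{\text{train}})+(L+K)(\rho_{\text{train}}-\rho_{\text{test}})$ first decreases and then increases, because ``initially the rapid decrease of $f$ dominates.'' Your key step differs, and in one respect it is better. You argue about $F$ itself: a minimizer exists, and $\rho_{\text{test}}$ is not one. Locating the minimizer of an upper bound, as the paper does, says nothing by itself about where Eq.~\eqref{eqn:emodelstar} is minimized. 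You are also right that the stated assumptions cannot settle the first-order comparison at $\rho_{\text{test}}$. The paper closes that step only with the informal slope assertion just quoted.

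The hypothesis you add, however, is stronger than needed and hard to defend. Since $T_2$ and $T_3$ enter $F$ with a plus sign, upper bounds on them are exactly what the local-improvement step requires. Matching first order is no obstruction; only the constants matter:
\[
F(\rho_{\text{test}}+h)-F(\rho_{\text{test}})\le \bigl[T_1(\rho_{\text{test}}+h)-T_1(\rho_{\text{test}})\bigr]+(L+K)h .
\]
This is negative for small $h>0$ as soon as $T_1$ decreases at $\rho_{\text{test}}$ at a rate exceeding $L+K$. That slope condition is what the paper's phrase encodes, and applying it to $F$ rather than to the bound gives the corollary for the actual quantity. Your zero-one-sided-derivative hypothesis instead says that conditioning on a positive density of additional exact observations changes neither the conditional law nor the model output to first order. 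Those observations pin the law down at the new pixels, so $T_2$ and $T_3$ generically grow at least linearly. That is the very sensitivity $L,K>0$ are meant to capture, so your hypothesis would rarely hold; replace it with the slope condition.

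Two smaller points. First, your nested coupling $\Psi_{\text{test}}\preceq\Psi_{\text{train}}$ is the reading under which the paper's ``expected difference $|\rho_{\text{train}}-\rho_{\text{test}}|$'' is correct. For independent masks the expected difference is $\rho_{\text{train}}+\rho_{\text{test}}-2\rho_{\text{train}}\rho_{\text{test}}$, which does not vanish at coincidence. Second, \textbf{A3} gives continuity of the score in $(t,x)$, not in the conditioning mask, so your continuity route to existence is not justified. Existence should instead come from the finite grid of rates $k/(mn)$. In that case the interior step must be a finite-difference inequality at $h=1/(mn)$, not a limit $h\to 0$.
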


Based on Eq.~\eqref{eqn:emodelstar}, consider
\begin{equation}
\begin{aligned}
E^*_{model}
~&~= \underbrace{\mathbf{W}(G_{\theta^*}(\cdot; \Psi_{\text{train}}), \mathcal{P}_{\text{train}})}_{(A)} \\
&\quad + \underbrace{\mathbf{W}(\mathcal{P}_{\text{train}}, \mathcal{P}_{\text{test}})}_{(B)}, \\
&\quad + \underbrace{\mathbf{W}(G_{\theta^*}(\cdot; \Psi_{\text{test}}), G_{\theta^*}(\cdot; \Psi_{\text{train}}))}_{(C)}
\end{aligned}
\end{equation}

In the following, we analyze each term and derive Corollary~\ref{cor:3}. 

\smallskip
\noindent \textbf{Training Distribution Matching Error (A)} \quad
This term characterizes the model's fit to the training conditional distribution. Since \( \rho_{\text{train}} \geq \rho_{\text{test}} \), the training phase provides more observational information, enabling the model to better learn the underlying radio propagation laws. 
Therefore, assuming sufficient model capacity, (A) monotonically decreases as \( \rho_{\text{train}} \) increases. Denote (A) as \( f(\rho_{\text{train}}) \), where \( f(\cdot) \) is a decreasing function.

\smallskip
\noindent \textbf{Conditional Distribution Shift Term (B)} \quad
This term reflects the difference in true distributions caused by differing observation conditions. Based on the physical smoothness and local correlation of radio propagation, there exists a constant \( L>0 \) such that the Wasserstein distance between two conditional distributions satisfies:
    \begin{equation}
    \mathbf{W}(\mathcal{P}(\cdot|\Psi_1), \mathcal{P}(\cdot|\Psi_2)) \leq L \cdot \mathbb{E}_{(i,j)}[|\Psi_1(i,j) - \Psi_2(i,j)|].
    \end{equation}
    When \( \Psi_{\text{train}} \) and \( \Psi_{\text{test}} \) are sampled i.i.d., the expected difference is \( |\rho_{\text{train}} - \rho_{\text{test}}| \), thus \( (B) \leq L \cdot |\rho_{\text{train}} - \rho_{\text{test}}| \).

\smallskip
\noindent \textbf{Conditional Input Difference Term (C)} \quad
This term measures the sensitivity of the model output to changes in the input mask. 
For a sufficiently trained diffusion model that robustly understands radio propagation laws, its output distribution should remain stable to small perturbations in the input. Assuming that the model satisfies Lipschitz continuity, there exists a constant \( K>0 \) such that:
    \begin{equation}
    \mathbf{W}(G_{\theta^*}(\cdot; \Psi_1), G_{\theta^*}(\cdot; \Psi_2)) \leq K \cdot \mathbb{E}_{(i,j)}[|\Psi_1(i,j) - \Psi_2(i,j)|],
    \end{equation}
    thus \( (C) \leq K \cdot |\rho_{\text{train}} - \rho_{\text{test}}| \).

Combining the above analysis, the model approximation error is upper bounded by:
\begin{equation}
\mathbb{E}[\mathcal{E}_{\text{approx}}] \leq f(\rho_{\text{train}}) + (L + K) \cdot (\rho_{\text{train}} - \rho_{\text{test}}).
\end{equation}
When the test sampling rate \( \rho_{\text{test}} \) is fixed, the upper bound $f(\rho_{\text{train}}) + (L + K) \cdot (\rho_{\text{train}} - \rho_{\text{test}})$ first decreases and then increases with \( \rho_{\text{train}} \) (\( \rho_{\text{train}} \geq \rho_{\text{test}} \)): initially, when \( \rho_{\text{train}} \) increases , the rapid decrease of \( f(\rho_{\text{train}}) \) dominates, causing the upper bound to decrease; as \( \rho_{\text{train}} \) increases further, the growth of the linear term \( (L+K)(\rho_{\text{train}}-\rho_{\text{test}}) \) gradually dominates, leading to an increase in the upper bound. Therefore, there exists an optimal \( \rho_{\text{train}}^* > \rho_{\text{test}} \) that minimizes the upper bound of \( \mathbb{E}[\mathcal{E}_{\text{approx}}] \). 

Corollary~\ref{cor:3} establishes that, for diffusion-based radio map estimation, setting the training sampling rate slightly above the test sampling rate achieves an optimal trade-off. 
This configuration balances two competing effects: it provides sufficient training data for the model to learn the underlying radio propagation laws and effectively complete missing information, while avoiding excessive conditional mismatch that would degrade performance in low-sampling test scenarios.
Specifically, when \( \rho_{\text{train}} = \rho_{\text{test}} \), the model may lack the capacity to generalize beyond the observed samples; when \( \rho_{\text{train}} \gg \rho_{\text{test}} \), the disparity in sampling conditions impairs the model's ability to perform accurately under the test-time sampling regime.

\section{Numerical Results}\label{sec:evaluation}

In this section, we use large-scale simulations to evaluate our proposed empirical formulas which approximate our derived theoretical performance bounds by diffusion models for radio map estimation.

\subsection{Simulation Setup}

We use the open-source radio map dataset \emph{BART-Lab radio maps}\footnote{https://github.com/BRATLab-UCD/Radiomap-Data} from~\cite{DataSet} for evaluation.
This dataset is obtained by the simulation software \emph{Altair Feko} from \emph{WinProp} ~\cite{altair_winprop_2021} and consists of 2000 radio maps.
The number of transmitters in one radio map varies from $1$ to $3$.
There are 2000 different layouts of buildings in the dataset, extracted from \emph{OpenStreetMap}~\cite{openstreetmap} in the United States.
The heights of the buildings are set as $10$ m.
This dataset consists of radio maps simulated in 5 different frequency bands: 1750 MHz, 2750 MHz, 3750 MHz, 4750 MHz, and 5750 MHz. 
We select the 5750 MHz subset for evaluation.
Considering that the resolutions of the original maps in the dataset are different, to ensure consistency, we randomly extract $256 \times 256$ (i.e., $m=n=256$) regions from each map.
This results in 1776 different radio maps, each of which has a resolution of $256 \times 256$.
Among them, $1676$ are used for model training and the remaining $100$ are used for test.
During test, the sampling rate varies from $1/(256\times 256)=0.002\%$ to $300/(256\times 256)=0.458\%$.
For a given sampling rate, radio samples are \textcolor{black}{randomly} distributed within the region of interest.

We simulate \emph{WiFi-Diffusion} from~\cite{wifi-diffusion} to evaluate our proposed empirical formulas. 
\emph{WiFi-Diffusion} first uses DDPM, a well-known diffusion model, to generate a candidate set of radio maps ($64$ radio maps in simulations) with diverse qualities under ultra-low sampling rates, and then employs principles of radio propagation to identify the most accurate map.
We implement \emph{WiFi-Diffusion} on an NVIDIA RTX 4090 GPU and an AMD EPYC 9654 CPU. 
Deep learning codes are built using PyTorch.

\subsection{Simulation Results}

\begin{figure}[]
	\centering
	\includegraphics[width=\linewidth]{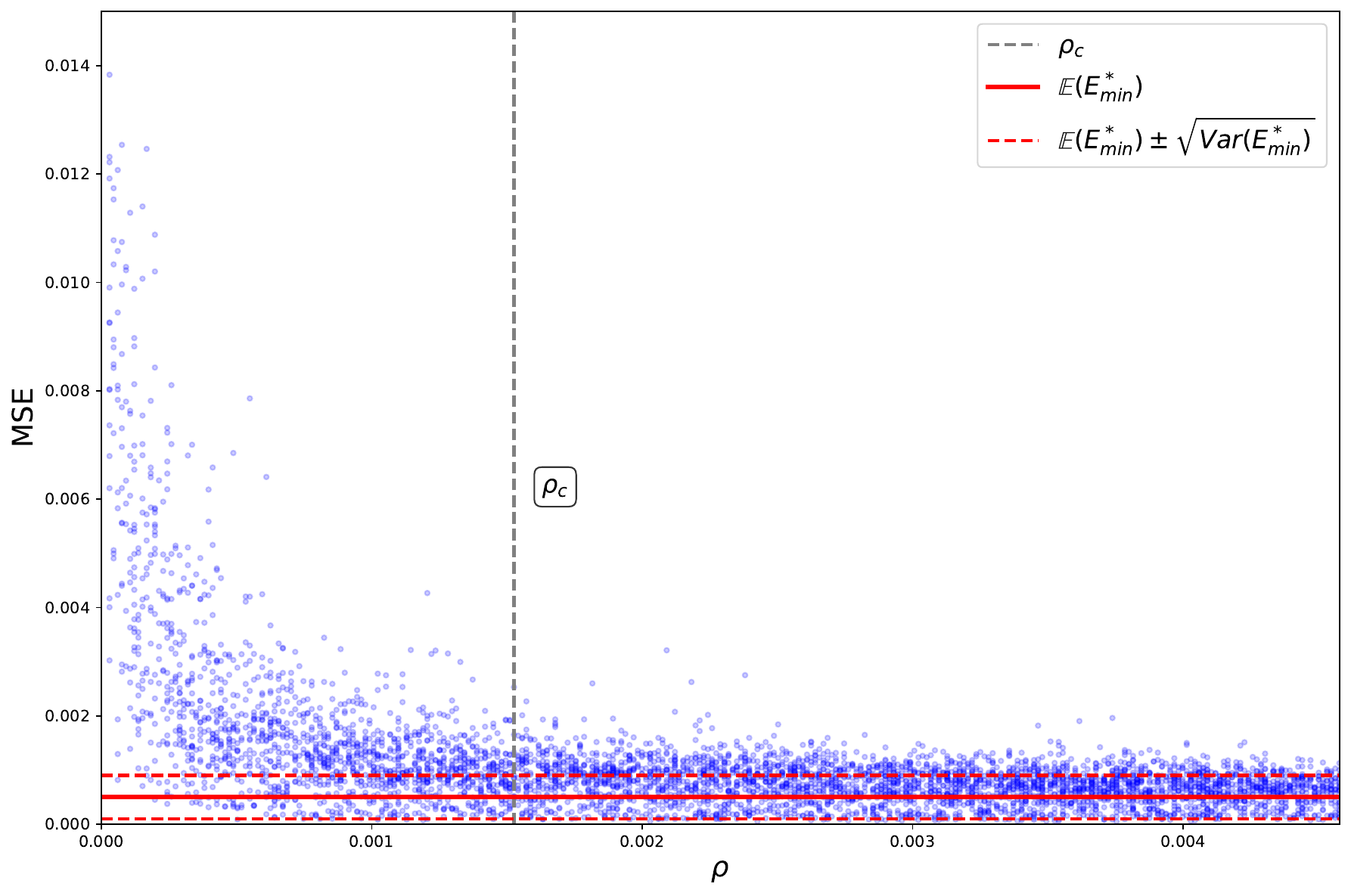}
    \caption{Simulation results to evaluate Corollary~\ref{cor:1}.}
	\label{fig:demo}
\end{figure}

\smallskip
\noindent \textbf{Evaluation on Corollary~\ref{cor:1}} \quad
In Theorem~\ref{T:max}, we derived $E_{\min}$, the minimum estimation error that can be achieved by a diffusion model for radio map estimation.
Since direct computation of $E_{\min}$ is difficult in practice, in Corollary~\ref{cor:1}, we introduced an empirical approximation $E_{\min}^*$ that is readily computable.
To assess the fidelity of this approximation, we conduct $5000$ simulation instances, each with a sampling rate $\rho$ randomly drawn from the interval $[0.002\%, 0.458\%]$. 
Fig.~\ref{fig:demo} illustrates the resulting MSE for each instance.

Fig.~\ref{fig:demo} reveals variability in the quality of maps produced by \emph{WiFi-Diffusion}, attributable to the stochastic nature of diffusion models (the dependence of the diffusion model's output on a random input noise).
Moreover, in the figure, the solid red line represents the expectation of $E_{\min}^*$ (recall that as discussed in Section~\ref{subsec:EminStar}, $E_{\min}^*$ is a function of the random variable $\widetilde{\zeta}$), while the dashed red lines indicate the standard deviation of $E_{\min}^*$.
Comparing these red lines with the MSE values of individual instances, we observe that $E_{\min}^*$ is an effective approximation to the minimum estimation error, as the majority of MSE values lie close to the expectation and within one standard deviation, with only minor deviations.

In Theorem~\ref{thm:convergence}, we derived a sampling rate threshold $\rho_c$ for diffusion models to achieve convergence.
In Fig.~\ref{fig:demo}, we also present $\rho_c$.
From the figure, we observe that as the sampling rate exceeds $\rho_c$, the quality of radio maps constructed by \emph{WiFi-Diffusion} remains relatively stable as the sampling rate increases, which again verifies the effectiveness of our empirical formula $E_{\min}^*$.

\begin{figure}[]
	\centering
	\includegraphics[width=\linewidth]{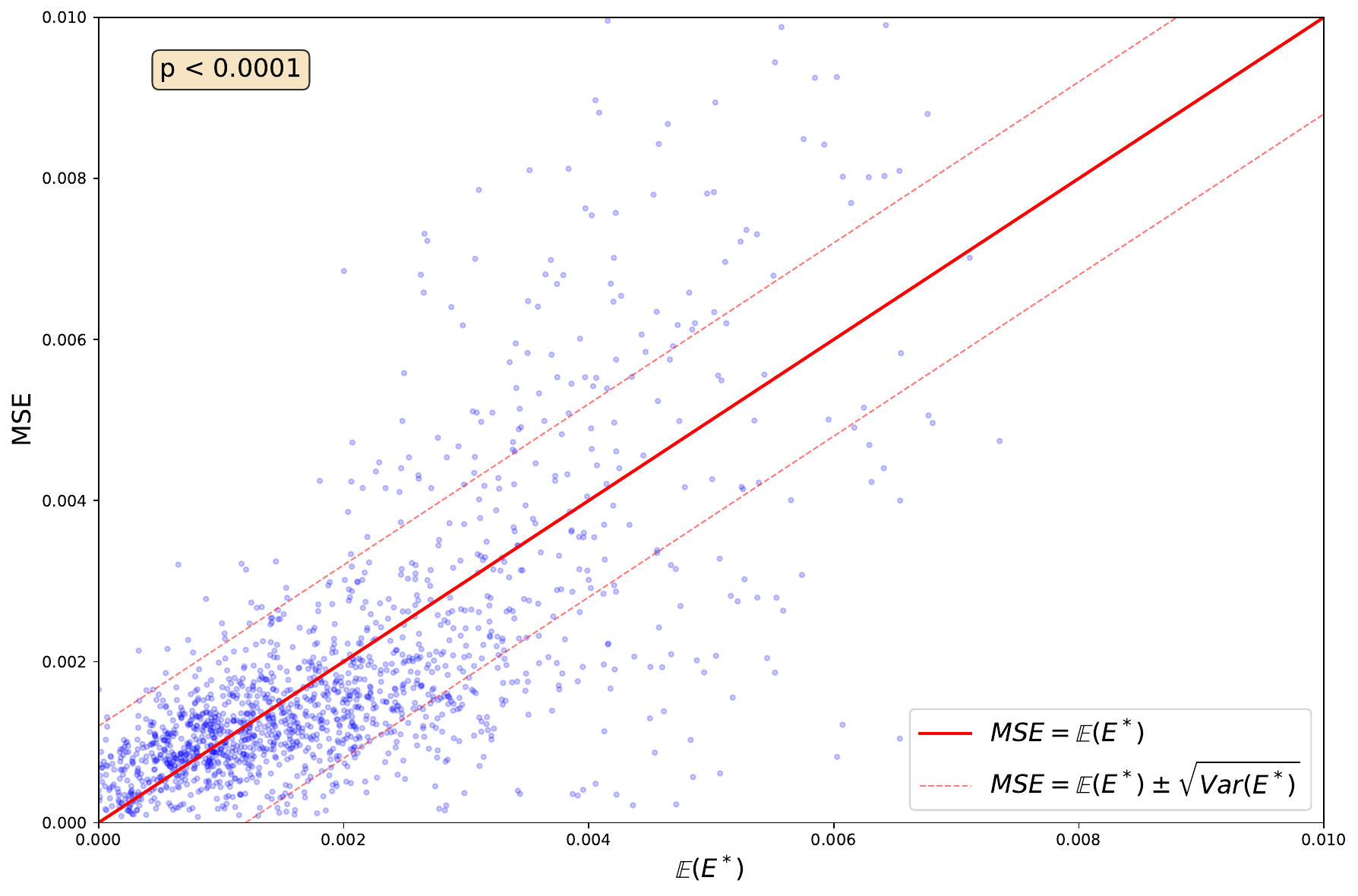}
    \caption{Simulation results to evaluate Corollary~\ref{cor:2}.}
	\label{fig:IDE}
\end{figure}

\smallskip
\noindent \textbf{Evaluation on Corollary~\ref{cor:2}} \quad
In Theorem~\ref{T:mean}, we derived $E$, the sampling-dependent estimation error that can be achieved by a diffusion model for radio map estimation.
Since direct computation of $E$ is difficult in practice, in Corollary~\ref{cor:2}, we introduced an empirical approximation $E^*$ that is readily computable.
To validate the effectiveness of this approximation, we conducted simulations and present the results in Fig.~\ref{fig:IDE}.
Specifically, for each simulated instance with a sampling rate below $0.0001$, we plot the MSE of the reconstructed radio map (y-axis) against the corresponding expectation of $E^*$ (x-axis).
We focus on low sampling rates because the estimation error $E$ is most sensitive to sampling density in this regime, i.e., the deviation between $E$ and the minimum error $E_{\min}$ becomes pronounced.
The figure confirms that $E^*$ serves as a reliable approximation of $E$: most data points are near the diagonal line, indicating close agreement between the achieved MSE and the derived error bound.

\begin{figure}[]
	\centering
\includegraphics[width=\linewidth]{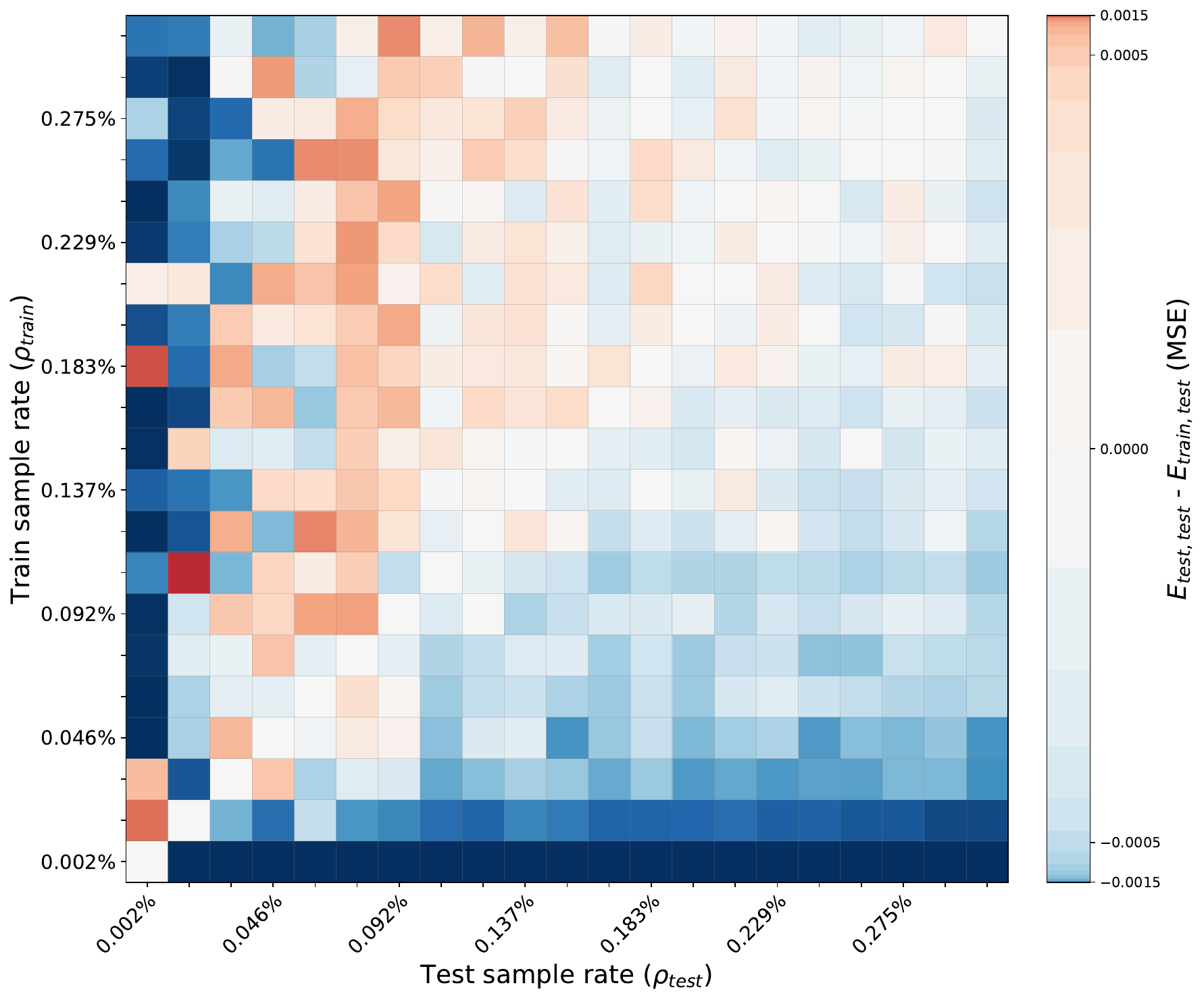}
    \caption{Simulation results to evaluate Corollary~\ref{cor:3}.}
	\label{fig:COR3}
\end{figure}

\smallskip
\noindent \textbf{Evaluation on Corollary~\ref{cor:3}} \quad
In Corollary~\ref{cor:3}, we established that for diffusion models, given a fixed test-time sampling rate, setting the training sampling rate slightly above the test sampling rate minimizes the estimation error at test time.
We now present simulation results to validate this corollary.
Consider the following set of sampling rates $P$:
\begin{equation}
\begin{aligned}
    P=\{&0.002\%,0.015\%,0.031\%,0.046\%,0.061\%,0.076\%,\\
    &0.092\%,0.107\%,0.122\%,0.137\%,0.153\%,0.168\%,\\
    &0.183\%,0.198\%,0.214\%,0.229\%,0.244\%,0.259\%,\\
    &0.275\%,0.290\%,0.305\%\}.\nonumber
\end{aligned}
\end{equation}
For each $\rho_{\text{train}}\in P$, we trained a diffusion model using that sampling rate.
Then each trained model was evaluated on every $\rho_{\text{test}}\in P$, yielding a total of $21\times 21$ MSE values corresponding to all pairs $(\rho_{\text{test}},\rho_{\text{train}})$.
The results are visualized in Fig.~\ref{fig:COR3}.

In Fig.~\ref{fig:COR3}, each cell indexed by $(\rho_{\text{test}},\rho_{\text{train}})$ displays the MSE difference $E_{\text{test},\text{test}}-E_{\text{train},\text{test}}$, where $E_{\text{test},\text{test}}$ denotes the MSE of a model trained and tested at the same sampling rate $\rho_{\text{test}}$, while $E_{\text{train},\text{test}}$ denotes the MSE of a model trained at $\rho_{\text{train}}$ and tested at $\rho_{\text{test}}$.
A larger value of this difference (indicated by a redder cell in the figure) suggests that the model trained at $\rho_{\text{train}}$ outperforms the baseline model trained and tested at $\rho_{\text{test}}$.
From Fig.~\ref{fig:COR3}, we observe that for each $\rho_{\text{test}}\in P$, the maximum difference occurs in the region above the diagonal, i.e., where $\rho_{\text{train}}>\rho_{\text{test}}$.
This confirms that the optimal training sampling rate for a given test rate exceeds the test rate itself, thereby validating Corollary~\ref{cor:3}.  

\begin{figure}[]
	\centering
	\includegraphics[width=\linewidth]{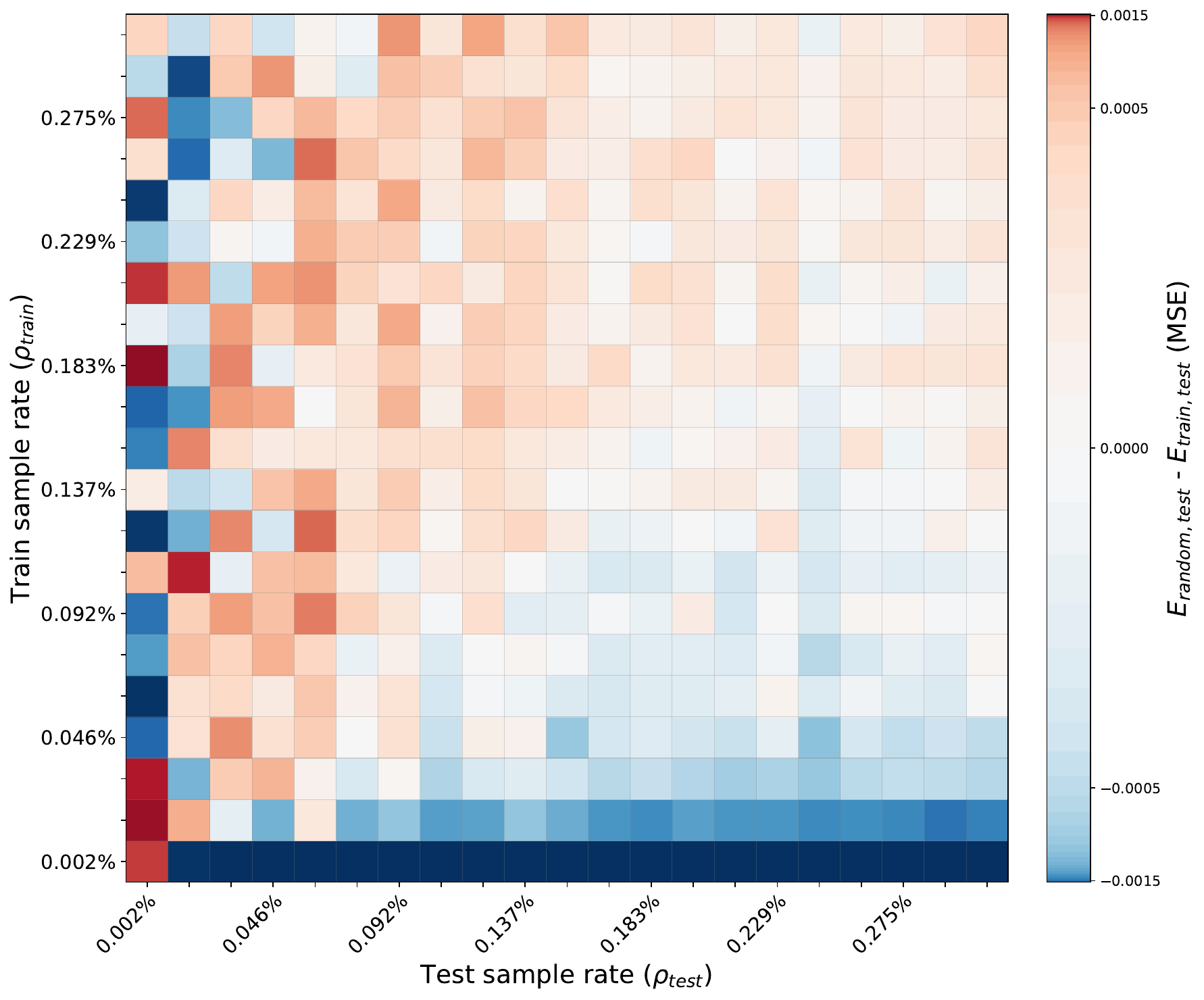}
    \caption{Simulation results comparing diffusion models trained with random sampling rates with those trained with a fixed sampling rate.}
	\label{fig:train}
\end{figure}

\smallskip
\noindent \textbf{More Simulation Results} \quad
We note that diffusion models can also be trained using random sampling rates, where each training instance employs a potentially different sampling rate. 
This raises an interesting question: Does training with random sampling rates yield better performance than training with a fixed sampling rate? 
Finally, we conducted simulations to address this question.

In Fig.~\ref{fig:train}, each cell indexed by $(\rho_{\text{test}},\rho_{\text{train}})$ displays the MSE difference $E_{\text{random},\text{test}}-E_{\text{train},\text{test}}$, where $E_{\text{random},\text{test}}$ denotes the MSE of a model trained at random sampling rates (i.e., for each training instance, the sampling rate is drawn uniformly from $P$) and tested at $\rho_{\text{test}}$, while $E_{\text{train},\text{test}}$ denotes the MSE of a model trained at a fixed rate $\rho_{\text{train}}$ and tested at $\rho_{\text{test}}$.
A positive value of this difference (indicated by a warmer color in the figure) indicates that the fixed-rate model outperforms the random-rate model.
From the figure, we observe that for each $\rho_{\text{test}}\in P$, the maximum difference occurs in the region above the diagonal.
This result demonstrates that training with a fixed sampling rate, particularly one slightly above the test rate, yields superior radio map estimation performance compared to training with random sampling rates.

\section{Conclusion}\label{sec:conclusion}
The radio map estimation problem requires one to construct a radio map from limited radio samples collected by sensors sparsely distributed within a region of interest.
It is challenging to solve the problem due to the extreme sparsity of radio samples relative to map resolution. 
While diffusion models have recently emerged as a promising approach for this problem, their theoretical performance has remained unexplored.
This work provides the missing theoretical foundation.

Specifically, by formulating radio map estimation as a nonlinear matrix completion problem, we first derive a theoretical lower bound on the minimum estimation error that can be achieved by diffusion models.
We then extend this bound to take sampling rate into consideration, capturing the additional error introduced by ultra-sparse sampling.
Furthermore, we establish a critical sampling rate threshold necessary for diffusion models to achieve performance convergence.
Finally, as we note that the derived error bounds rely on certain information difficult to obtain in practice, we propose empirical formulas which approximate error bounds and are readily computable from observable data.
Extensive simulations verify that our proposed approximations are tight.

As future research directions, we plan to design diffusion-based radio map estimation models that surpass current approaches and approach the theoretical accuracy bounds established in this work, and develop radio sampling strategies adaptive to dynamic environments, enhancing the estimation performance of diffusion models. 

\bibliographystyle{IEEEtran}
\bibliography{Theory}

\end{document}